\newtheorem{theorem}{Theorem}
\newtheorem{lemma}{Lemma}
\newtheorem{observation}{Observation}
\def\R{\mathbb R}
\newcommand{\LZ}{\it{L}}
\newcommand{\LMZ}{\it{LM}}
\newcommand{\RMZ}{\mathit{RM}}
\newcommand{\RZ}{\mathit{R}}
\newcommand{\bl}{\mathit{b}}
\newcommand{\rl}{\mathit{r}}
\begin{document}

\title
{Saturated simple and 2-simple topological graphs with few edges}

\author{P\'eter Hajnal\\
            Bolyai Institute, University of Szeged,\\
            Aradi v\'ertan\'uk tere 1, Szeged, Hungary 6720\\
\smallskip            \\
        Alexander Igamberdiev\\
            Institut f\"ur Mathematische Logik und Grundlagenforschung\\
            Fachbereich Mathematik und Informatik der Universit\"at M\"unster\\
            Einsteinstra{\ss}e 62, D-48149 M\"unster, Germany\\
\smallskip            \\
        G\"unter Rote\\
            Freie Universit\"at Berlin,
            Institut f\"ur Informatik,\\
            Takustra{\ss}e 9, D-14195 Berlin, Germany\\
\smallskip            \\
        Andr\'e Schulz\\
            Institut f\"ur Mathematische Logik und Grundlagenforschung\\
            Fachbereich Mathematik und Informatik der Universit\"at M\"unster\\
            Einsteinstra{\ss}e 62, D-48149 M\"unster, Germany}


\maketitle




\begin{abstract}
A \emph{simple topological graph} is a topological graph
in which any two edges have at most one common point, which
is either their common endpoint or a proper crossing.
More generally, in a \emph{$k$-simple topological graph},
every pair of edges has at most $k$ common points of this kind.
We construct \emph{saturated} simple and 2-simple graphs
with few edges. These are $k$-simple graphs
in which no further edge can be added. 
We improve the previous upper bounds
of Kyn\v{c}l, Pach, Radoi\v{c}i\'c, and T\'oth~\cite{kprt} and show that
there are saturated simple graphs on~$n$ vertices with only $7n$~edges and
 saturated 2-simple graphs on~$n$ vertices with $14.5n$~edges. 
As a consequence, $14.5n$~edges is also a new upper bound for 
$k$-simple graphs (considering all values of $k$).
We also construct saturated simple and 2-simple graphs that have some vertices
with low degree.
\end{abstract}


\section{Introduction}\label{intro}

Let $G=(V,E)$ be a finite simple graph.
A  \emph{drawing} of $G$ is a map $\delta\colon  V\cup E \to \R^2$
that is one-to-one on 
$\delta|_V\colon V\to\R^2$, i.e.,~$\delta$ assigns the vertices of the graph to different points of the plane. 
Furthermore, we require that $\delta|_E\colon E\to\mathcal C$,
where $\mathcal{C}$ is a set of ``nice'' non-self-intersecting
curves with two boundary points of the plane. For example we might think of 
$\mathcal C$ as the set of all Jordan curves or, more elementary, 
of the set of all simple polygonal curves.
For simplicity, we will not distinguish between an edge and
the curve on which it is embedded,
and between a vertex and 
the point on which it is embedded.
We assume that 
for any $e=xy\in E$ the  edge $\delta(e)$ is a curve
connecting $\delta(x)$ and $\delta(y)$ and it doesn't
go through any other vertex, and also that 
any two different edges meet at finitely many
points and any meeting point --- that is not a common
endvertex --- is a proper crossing of the two curves.

The pair $(G,\delta)$, i.e., a graph with a drawing, is called a \emph{topological
graph}. A topological graph $(G,\delta)$ is \emph{simple} if in $\delta$ two edges have at most
one common point. More generally, the topological graph is called $k$-simple if in $\delta$
two edges have at most $k$ common points. For both simple and $k$-simple graphs we 
do not allow self-intersecting edges. 
A topological graph is a \emph{geometric graph} if all its
edges are drawn as straight-line segments.
Obviously, every geometric graph is simple, 
provided that the vertices are placed
in general position.
Thus, every graph 
 has simple drawings.

For a graph property~$T$, a graph $G$ is \emph{$\mathcal T$-saturated} if $G$
has property $\mathcal T$, but the addition of 
any edge joining two non-adjacent vertices of
$G$ violates property $\mathcal T$.
Often structures with property $\mathcal T$ 
are quite hard to grasp, but
$\mathcal T$-saturated structures
might have a more useful character.
We direct the interested reader to applications of the
saturation technique~\cite{ehm,kt,l}.
This notion can be naturally extended to hypergraphs.
A thorough survey by Faudree, Faudree, and Schmitt~\cite{ffs}
discusses the case when
property $\mathcal T$ is ``not having $F$ as a sub(hyper)graph''.

In this paper we study \emph{saturated} $k$-simple topological graphs. 
These are topological graphs that are $k$-simple, but no edge can be added
without violating the $k$-simplicity of the drawing.
Saturated planar drawings are triangulations and have therefore due to Euler's formula
 $3n-6$ edges.
Recently, Kyn\v{c}l, Pach, Radoi\v{c}i\'c, and G.~T\'oth~\cite{kprt}  started to investigate
saturated simple $k$-simple graphs.
The maximum number of edges a saturated simple topological graph can have is clearly $\binom{n}{2}$, since 
the geometric graph of $K_n$ with vertices in general position is a simple drawing. 
The more intriguing questions ask about the minimum number of edges for
saturated $k$-simple topological graph.
One of the main results of Kyn\v{c}l et al.~\cite{kprt} is a
construction of sparse saturated simple and $k$-simple topological
graphs. We denote by $s_k(n)$ the minimum number of edges a saturated $k$-simple graph with~$n$ vertices can have.
Their upper bound on $s_k(n)$ 
is a linear function of $n$, for $n$ being the number of vertices; see Table~\ref{tab:results} for the bounds
obtained by Kyn\v{c}l et al.~\cite{kprt}.
The gap between the best known upper and lower bounds for $s_k(n)$ is quite substantial. We only know that $s_1(n)\ge 1.5 n$ and that
$s_k(n)\ge n$~\cite{kprt}.

\setlength{\tabcolsep}{.4em}
\begin{table}[b]
\centering
\begin{tabular}{l|c|c|c|c|c|c|c|c}
 $k$ &  1& 2 & 3 & 4 & 5 & 6,8,10 & 7 &  9, $\ge11$  \\
 \hline
 old upper bounds~\cite{kprt}& $17.5n$   & $16n$  & $14.5n$ &$13.5n$  & $13n $ & $9.5n$  & $10n$ & $7n$     \\
 new upper bounds & $\mathbf{7n}$ & $\mathbf{14.5n}$ &  &  &  &  &  &  \\
\end{tabular}
\vskip4mm
\caption{Old and new upper bounds for~$s_k(n)$, the minimum number of edges in a saturated $k$-simple graph with~$n$ vertices.}
\label{tab:results}
\end{table}

\paragraph{Our contribution.}
We improve the upper bounds for $s_k(n)$ for $k=1,2$. We do this by
showing that for any positive integer $n$ there exists a 
saturated simple topological graph with at most $7n$ edges~(in Sect.~\ref{sec:simple}), and a 
saturated 2-simple graph with at most~$14.5n$ edges~(in~Sect.~\ref{sec:2simple}).
Sections~\ref{sec:simple} and~\ref{sec:2simple} are independent.
This result
also implies that there are saturated $k$-simple graphs with at most~$14.5n$ edges for every~$k$.
See also Table~\ref{tab:results} for a comparison with the old bounds.
Our proofs are constructive, i.e., we can explicitly present the sparse saturated graphs. 
%
%

We complete our results by studying \emph{local saturation} of topological graphs.
Here, local saturation refers to drawings in which one (or several) vertices have a small
vertex degree even though the full drawing might not be the sparsest. Such observations might be 
helpful in further studies, e.g.,~if we want to estimate techniques for proving lower bounds
that are based on the minimum vertex degree in saturated graphs. We show that there are 
saturated simple graphs that have a vertex of degree~4, and saturated simple graphs in which 10~percent
of the vertices have degree~5. For saturated 2-simple graphs we can prove that there are 
drawings with minimum degree~12. The currents lower bounds for $s_k(n)$ are obtained by bounding the minimum 
vertex degree in saturated $k$-simple graphs~\cite{kprt}. Our results show the limits of this approach. 


\section{Saturated simple topological graph with few edges}
\label{sec:simple}
In this section we give a construction that generates 
sparse saturated simple graphs.
We start with 
defining a graph~$G$, parametrized by an integer~$k$, 
with~$n=6k$ vertices and~$9k-6$ edges. This 
graph is the backbone of our sparse saturated graph.

The drawing is best visualized on the surface of a long circular
cylinder.
Fig.~\ref{cylinder} shows an unrolling of the cylinder into the
plane.
The cylinder is obtained by cutting the drawing along the two dotted
lines and gluing the top and the bottom together.
The vertices of the graph are placed in a $3\times 2k$-grid-like fashion. 
We draw the vertices together in pairs, with each vertex
$X_i^L$ on the \emph{left} and the corresponding vertex
$X_i^R$ on the \emph{right}, for  $X=A,B,C$ and $i=1,\ldots ,k$.
We refer to the vertices whose label have the subscript~$i$
as the \emph{$i$-th layer}.
 $G$ is the union of
\begin{itemize}[itemsep=0ex]
\item three vertex-disjoint paths of  \emph{blue edges} connecting 
$A^L_1A^L_2\ldots A^L_k$, $B^L_1B^L_2\ldots B^L_k$,
and $C^L_1C^L_2\ldots C^L_k$,
\item three vertex-disjoint paths of  \emph{red edges} connecting 
$A^R_1A^R_2\ldots A^R_k$, $B^R_1B^R_2\ldots B^R_k$, and
$C^R_1C^R_2\ldots C^R_k$, and
\item $k$ disjoint cycles of \emph{green edges} connecting
$A^L_iB^L_iC^L_i$.
\end{itemize}

\begin{figure}[htbp]
  \centering
  \includegraphics[scale=0.65]{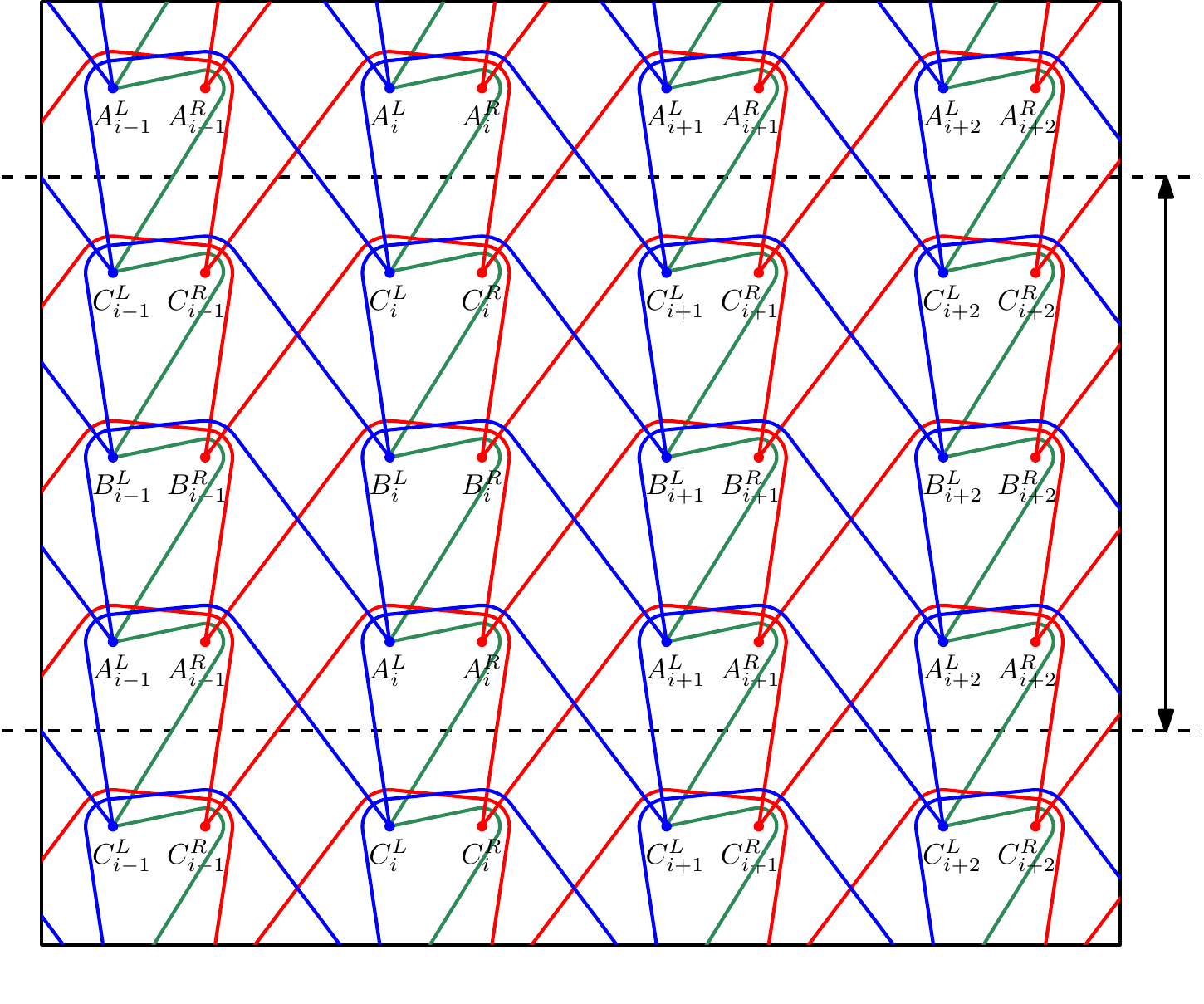}
  \caption{The graph $G$ on an unrolled cylinder.}
  \label{cylinder}
\end{figure}

The cylinder can be homeomorphically mapped into the plane, as shown
in 
Fig.~\ref{circular}
for the red and blue edges only.  The
horizontal directions turn into radial directions. But the resulting
drawings suffer from large distortions, and the left-right symmetry is
lost.  We therefore prefer the cylindrical drawings,
and we extend the cylinder surface
periodically beyond the dotted lines (using the plane as a universal
cover of the cylinder). One should however be aware that
vertices (and edges) that appear as distinct in the figure may denote
the same vertex, as
indicated by the vertex labels.

\begin{figure}[H]
  \centering
  \includegraphics[scale=0.6]{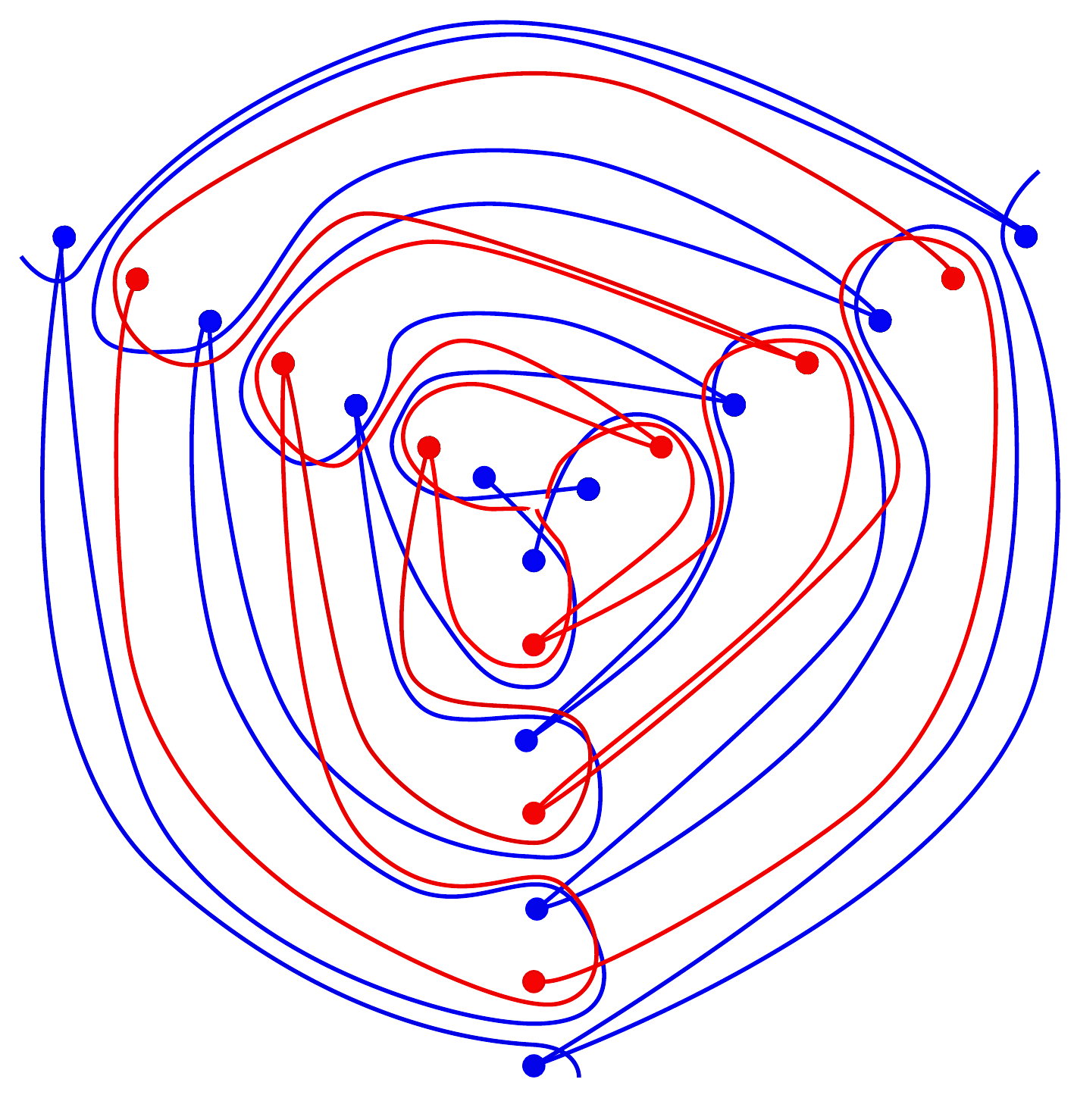}
  \caption{The graph $G$ on the plane.}
  \label{circular}
\end{figure}

We will first consider the graph $G_{RB}$ that omits the green edges, 
because this graph
is more symmetric: with the exception of the vertices
$X^{L/R}_1$ and
$X^{L/R}_k$ near the boundary, all vertices look identical.
Apart from these boundary effects,
the drawing has a rotational symmetry, cyclically shifting the labels
$A\to B\to C\to A$,
 a translational symmetry, shifting indices $i$ up or down,
and a mirror symmetry, exchanging left with right and blue with red.
The green edges destroy this mirror symmetry: there are then two
classes of vertices, the blue vertices
$X_i^L$ and the red vertices
$X_i^R$.

\begin{figure}[htbp]
  \centering
\hrule height0pt
\vskip -12mm
  \includegraphics[scale=0.851]{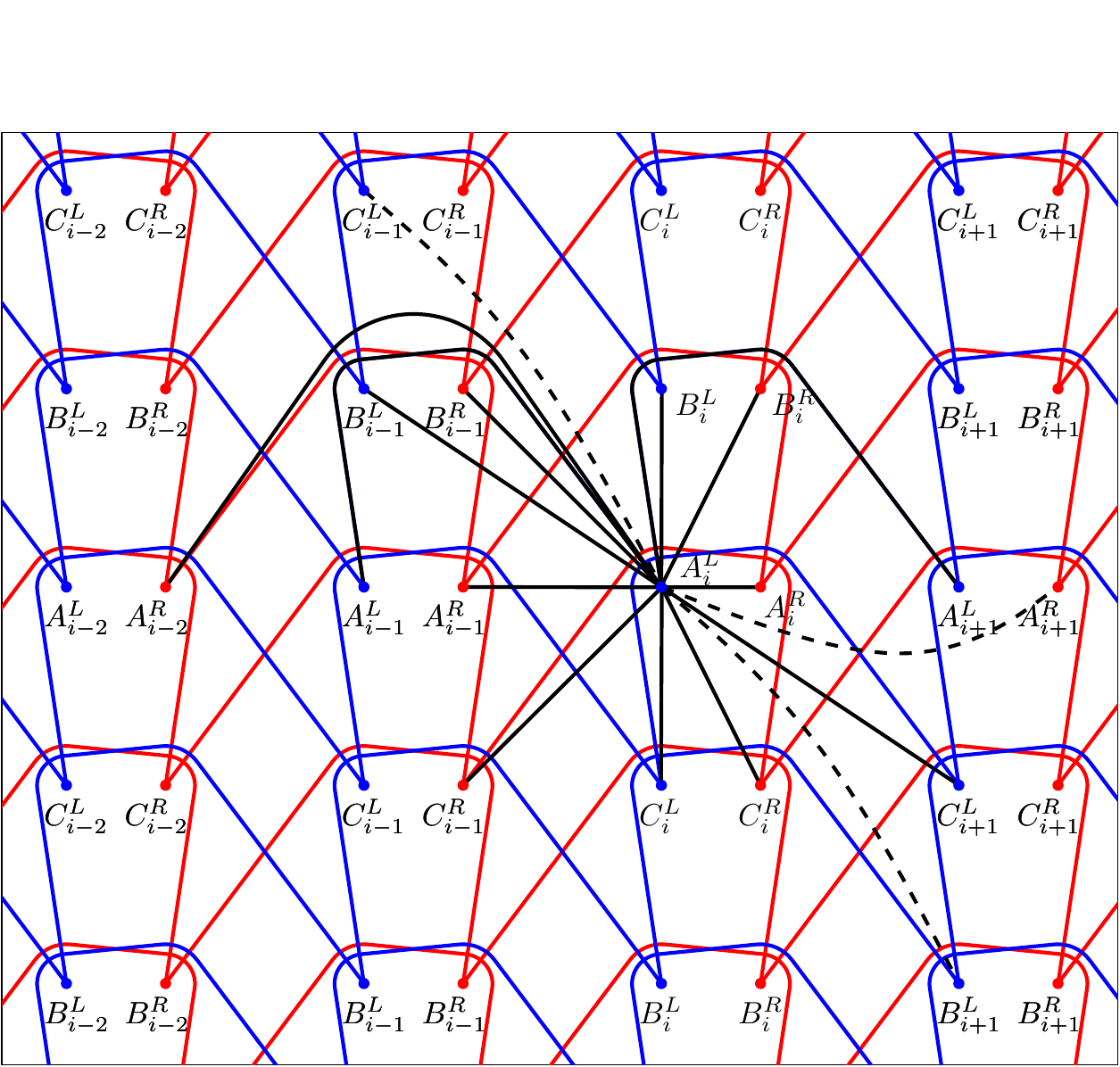}
  \caption{The 16 potential neighbors of a vertex.}
  \label{fig-neighbors}
\end{figure}

Let $G_{RB}$ denote the
topological graph obtained by restricting $G$ to the red and blue edges.
We will show that the maximum degree in any saturated drawing
which extends
$G_{RB}$ is 16. The 16 potential neighbors of a typical vertex
$A_i^L$
are shown in Fig.~\ref{fig-neighbors}.
This establishes that there are saturated drawings with $n$ vertices and 
less than
$8n$ edges.
When the green edges are included,
the three dashed edges in Fig.~\ref{fig-neighbors} become impossible.
Thus, each blue vertex has 13 potential neighbors.
The red vertex $A^R_{i+1}$, which can be taken as a representative of
a typical red vertex, loses $A_l^L$ as a potential neighbor.
Thus, each red vertex has at most 15 potential neighbors.
This improves the upper bound for the smallest number of edges in 
a saturated drawings with $n$ vertices to
$7n$.

\begin{theorem}\label{SimpleTheorem}
Let $s(n)$ denote the minimum number of edges that
a simple saturated drawing with $n$ vertices can have.
Then 
$$s(n)\le 7 n.$$
\end{theorem}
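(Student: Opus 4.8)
The plan is to build an explicit saturated simple drawing from the backbone graph $G$ described above, and then to verify the two things that ``saturated'' requires: that the drawing is simple, and that no non-edge of the final graph can be drawn without creating a second crossing or an illegal self-intersection with some existing edge. Concretely, I would start from $G$ on the cylinder (with $n=6k$ vertices and the blue, red, green edge families), and then greedily add edges one at a time, in an arbitrary order, always inserting a curve between two currently non-adjacent vertices whenever such a curve can be drawn keeping the drawing simple. This process terminates in a saturated simple topological graph $G^+ \supseteq G$, and the whole argument reduces to bounding the number of edges of $G^+$ from above, uniformly over all the choices made during the greedy process.

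The core of the proof is the degree bound already announced in the excerpt: every vertex of $G^+$ has degree at most $13$ if it is a blue vertex $X_i^L$ and at most $15$ if it is a red vertex $X_i^R$. To prove this I would fix a vertex, say the representative $A_i^L$, and analyze which other vertices $v$ admit a curve from $A_i^L$ to $v$ that crosses each edge of $G$ at most once and crosses no edge at all more than once. The key topological fact to exploit is that the blue, red, and green edges of $G$ already cut the cylinder into faces, and a new simple edge from $A_i^L$ must stay ``close enough'' to $A_i^L$ in the sense of crossing only boundedly many of these cutting curves; layers that are too far away in the $i$-index are unreachable because reaching them forces repeated crossings of the same red or blue path edge. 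This is exactly the content of Fig.~\ref{fig-neighbors}: one enumerates the $16$ candidate neighbors for $G_{RB}$, then checks that adding the green cycles kills the three dashed candidates for a blue vertex (leaving $13$) and one candidate for a red vertex (leaving $15$). I would present this as a finite case check, grouping the candidate endpoints by which region of the drawing they lie in and in each case either exhibiting the connecting curve or showing any connecting curve is forced to recross an edge.

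Given the degree bound, the edge count follows by a routine handshake estimate: with $k$ blue and $k$ red vertices of each of the three types, there are $3k$ blue vertices of degree $\le 13$ and $3k$ red vertices of degree $\le 15$, so $2|E(G^+)| \le 3k\cdot 13 + 3k \cdot 15 = 84k$, giving $|E(G^+)| \le 42k = 7n$ since $n = 6k$. To handle arbitrary $n$ (not just multiples of $6$) I would note that one can either pad the construction with a bounded number of extra vertices placed in a face and observe the asymptotic bound is unaffected, or, more cleanly, present the inequality as $s(n) \le 7n$ which absorbs lower-order terms; if an exact statement for all $n$ is wanted, a small ad hoc modification of the last layer suffices.

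The main obstacle, and where the real work lies, is the degree upper bound --- specifically proving that a candidate neighbor \emph{not} in the list of $13$ (resp. $15$) is genuinely unreachable by \emph{any} simple curve, not just by the ``obvious'' one. This requires a clean topological argument that a curve from $A_i^L$ that wanders into a far layer must cross some fixed path edge (blue, red, or green) at least twice; the natural tool is to look at how the red and blue monotone paths, together with the green cycles, separate the cylinder, and to track the parity or count of crossings of a suitably chosen separating curve. A secondary subtlety is verifying that $G$ itself is drawn as a simple topological graph (no two of its own edges cross more than once), and that a near-boundary vertex $X_1^{L/R}$ or $X_k^{L/R}$ does not somehow gain extra neighbors from the missing layer --- i.e., the boundary cases only \emph{lose} potential neighbors relative to the generic vertex, so the bound $7n$ is safe. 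I would dispatch the boundary vertices by inspection after the generic case is settled.
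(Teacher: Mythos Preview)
Your proposal is correct and follows essentially the same approach as the paper: build $G$ on the cylinder, saturate greedily, prove the degree bounds $13$/$15$ for blue/red vertices via a topological separation argument, and conclude by handshaking. The paper's specific tools for the two places you flag as the real work are a ``belt lemma'' (the six edges between two consecutive layers of $G_{RB}$ form a barrier that no new simple edge can cross, which immediately confines potential neighbors to nearby layers) and, for $n$ not divisible by~$6$, a ``cloning'' construction in which a minimum-degree vertex is duplicated together with parallel copies of its incident edges---your ``pad with extra vertices in a face'' would need exactly this refinement, since isolated padded vertices would not be saturated.
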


The remainder of this section is devoted to proving the above theorem. We start with the
analysis of the graph $G_{RB}$.


\subsection{The graph $G_{RB}$}

\begin{lemma}\label{neighbors-RB}
  The $16$ potential neighbors of a typical vertex $A^L_i$
in $G_{RB}$ are
all $11$ vertices of levels $i-1$ and $i$
\textup(%
$A^L_{i-1},B^L_{i-1},C^L_{i-1};$
$A^R_{i-1},B^R_{i-1},C^R_{i-1};$
$B^L_{i},C^L_{i};$
$A^R_i,B^R_{i},C^R_{i}$\textup)
plus the $5$ vertices
$A^R_{i-2};$
$A^L_{i+1},B^L_{i+1},C^L_{i+1};$
$A^R_{i+1}$.
\end{lemma}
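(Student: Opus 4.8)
This is an exact determination of a set, so the proof has two halves: each of the sixteen vertices is a neighbour of $A_i^L$ in some simple drawing extending $G_{RB}$, and no other vertex is. The first half is trivial for the two blue neighbours $A_{i-1}^L,A_{i+1}^L$, and for the remaining fourteen I would dispatch it by exhibiting the connecting curves, as indicated in Fig.~\ref{fig-neighbors}. Each stays within two layers of $A_i^L$, and since $G_{RB}$ is invariant under the translation $i\mapsto i+1$, the rotation $A\to B\to C\to A$, and the mirror exchanging left with right and blue with red (for a \emph{typical} vertex, i.e.\ with $i$ bounded away from $1$ and $k$, all three apply), it suffices to draw a few representatives and transport the rest. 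For each I would check that it crosses every edge of $G_{RB}$ at most once and meets no edge incident to $A_i^L$ or to its far endpoint except at that vertex — a finite local inspection.

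The bulk of the work is the second half, which I would phrase via the faces of the planarization of $G_{RB}$ on the cylinder. A curve realising a new edge from $A_i^L$ to a non-neighbour $W$ in a simple extension crosses a sequence of pairwise distinct edges of $G_{RB}$, hence traces a trail in the dual graph that starts at one of the two faces incident to $A_i^L$ (which has degree~$2$ in $G_{RB}$), uses neither edge at $A_i^L$, and uses no edge incident to $W$; conversely, any such trail ending at a face with $W$ on its boundary exhibits $W$ as a neighbour in a simple extension. So I must show the faces reachable by such trails touch exactly the sixteen listed vertices. By the symmetries this reduces to proving $A_i^L$ cannot be joined to (a)~$B_{i+1}^R$ or $C_{i+1}^R$, (b)~any vertex of layer $\ge i+2$, or (c)~any vertex of layer $\le i-2$ other than $A_{i-2}^R$.

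For (b) the plan is a separation argument. A vertical circle of the cylinder placed between layers $i+1$ and $i+2$ separates $A_i^L$ from all of layers $\ge i+2$, so the trail realising the edge to $W$ must pass through the thin slab around that circle; the slab is cut by exactly the (at most six) blue and red edges spanning it into a few cells, and tracking the cyclic order in which those edges, the edges already crossed by the trail, and the two forbidden edges at $A_i^L$ appear around the relevant vertices — read off from Fig.~\ref{cylinder} — forces the trail to cross one of these edges a second time, or to cross an edge at $A_i^L$. One such argument, at the slab between layers $i+1$ and $i+2$, disposes of all of layer $\ge i+2$ at once; part (c) is largely symmetric, using the slab between layers $i-2$ and $i-1$, except that the forcing must genuinely break down for $A_{i-2}^R$ — the delicate point below — and part (a) is the same reasoning localised to the slab through layer~$i+1$.

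Two points I expect to be the real obstacle. First, the asymmetric exceptions: $A_{i-2}^R$ is reachable while $B_{i-2}^R,C_{i-2}^R$ are not, and $A_{i+1}^R$ is reachable while $B_{i+1}^R,C_{i+1}^R$ are not. Here the separation argument must be pushed to the finest resolution — distinguishing sub-arcs of a single edge and the precise faces incident to $W$ — because whether a valid trail to $W$ can close up through a face \emph{at all} is exactly what decides reachability, and making this bookkeeping agree with Fig.~\ref{cylinder} is where most of the effort will go. Second, curves that wind around the cylinder: I would lift to the universal cover, an infinite strip on which $G_{RB}$ becomes a $\mathbb{Z}$-periodic stack of the three blue and three red paths; there ``crosses each edge at most once'' becomes ``crosses the union of all lifts of a given edge at most once'', the separating circles become bi-infinite separating arcs, and all windings are killed. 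The nearby positive curves and the fact that farther targets are only harder are routine; the covering-space bookkeeping together with the exceptional cases is the crux.
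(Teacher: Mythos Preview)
Your outline is broadly sound and would lead to a correct proof, but the organisation differs from the paper's in one substantive way. The paper does not argue via a thin separating slab and cyclic orders at vertices; instead it isolates a reusable structural lemma about the \emph{belt}---the full configuration of the six edges between two consecutive layers together with all their mutual crossings---and proves (Lemma~\ref{belt}) that any curve entering a belt edge ``from the top'' or ``from the outside'' must terminate inside the belt, so no edge can cross a belt at all. This single lemma immediately kills all layers $j\le i-3$ and $j\ge i+3$, and is then reused for the finer cases: layer $i+2$ is excluded by observing that leaving the face of $A_i^L$ toward the right forces a top-crossing of the belt between layers $i$ and $i+1$ (hence confinement there), and the asymmetric cases $B_{i+1}^R,C_{i+1}^R$ and the survival of $A_{i-2}^R$ are handled by short ``closed edges plus successive trapping regions'' arguments rather than by cyclic-order bookkeeping.

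What this buys the paper is that the hard combinatorics is done once, inside Lemma~\ref{belt}, and every exclusion afterwards is two or three sentences. Your slab argument, by contrast, looks at only a thin cross-section of the belt, where the six edges appear as six parallel arcs; that cross-section by itself does not obstruct anything (a curve can pass straight through a cell of the slab without crossing any edge), so the real work in your plan is hidden in ``tracking the cyclic order \ldots\ around the relevant vertices''. Making that precise would force you to look at how the six belt edges interlace with one another away from the slab---which is exactly the content of the belt lemma. Your universal-cover device for winding is fine (the paper uses it implicitly), and your reduction by symmetry matches the paper's; the difference is that the paper names and proves the belt obstruction once, while you would have to rediscover it inside each of (a), (b), (c).
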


When any of the neighbors listed above does not exist because $i\le 2$
or $i=k$, the lemma still holds in the sense that the remaining
vertices form the set of potential neighbors.
In the proofs, when we exclude an edge between, say, levels $i$ and $j$, our
arguments will not use edges outside this range.

In the following we will look at the given drawing of $G_{RB}$ (or
$G$) and argue about the additional edges that can be drawn.  The
implicit assumption is that these edges cannot cross any given edge
more than once. Usually, we will regard a new edge as a directed edge,
starting at some vertex and trying to reach another vertex.




A \emph{belt} is a substructure of our drawing. It is formed by
the 12 vertices of two successive layers with their 6 edges between
them, see Fig.~\ref{belt2}.
This drawing 
 separates a large face on the left from a large face on the right.
More precisely, the belt is defined as the part of the plane (or the
cylinder)
which lies between these two large faces (shaded area).


\begin{figure}
\centering
\begin{minipage}{.4\columnwidth}%
    \includegraphics[scale=.70]{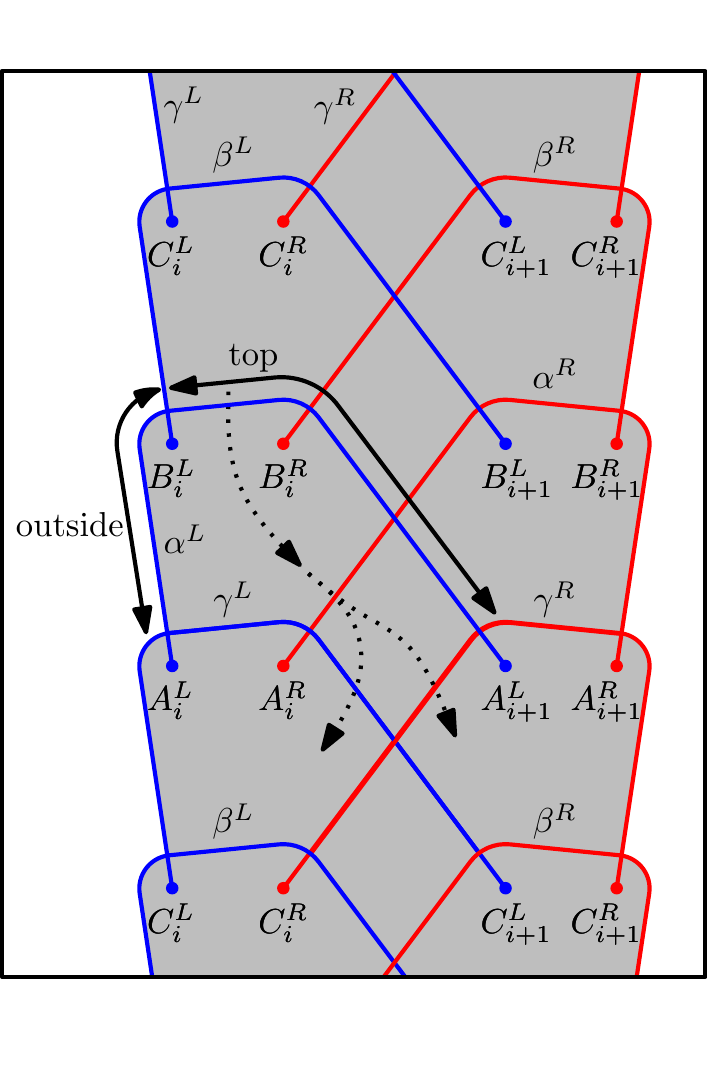}
    \vskip -7mm
\caption{Escape from a belt is difficult
(Lemma~\ref{belt}).}%
\label{belt2}%
\end{minipage}
\qquad
\begin{minipage}{.5\columnwidth}%
    \includegraphics[scale=.65]{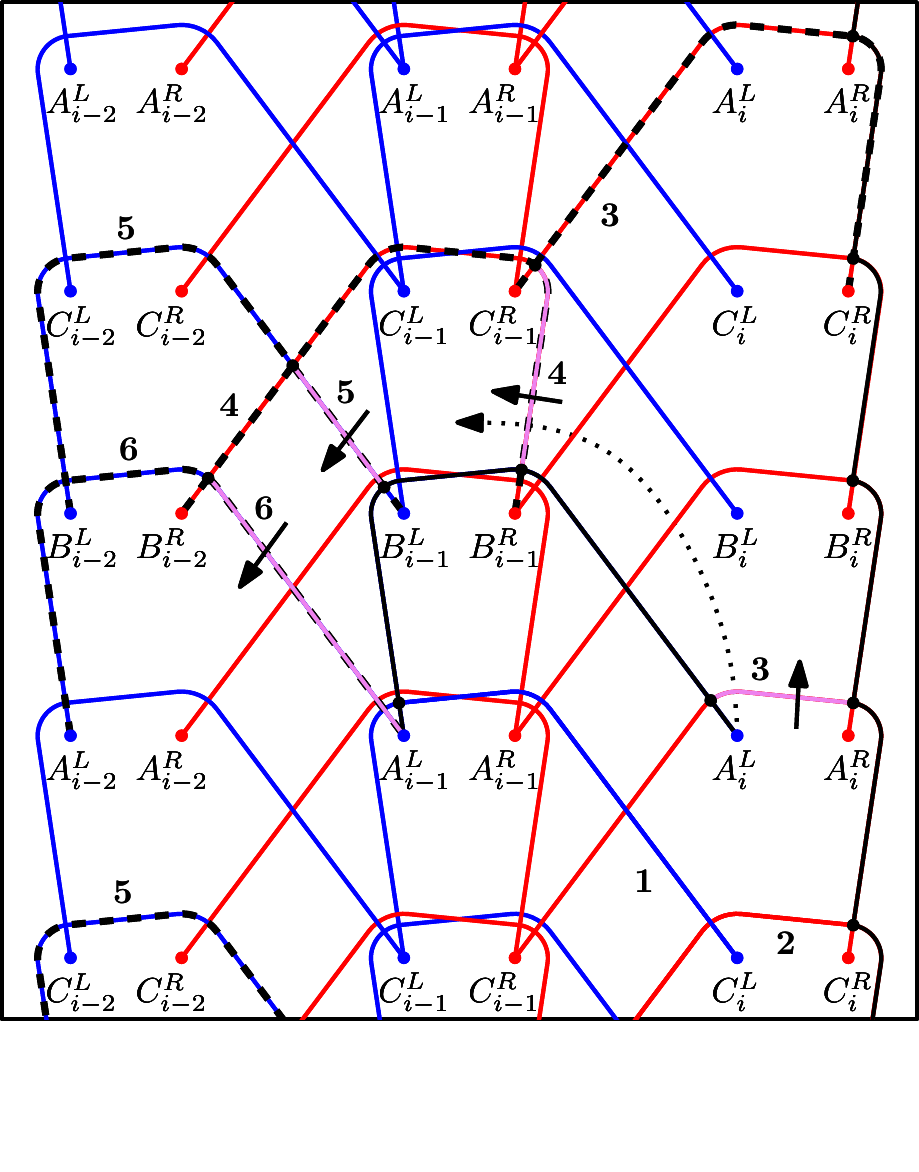}
\vskip -10mm
\caption{The situation discussed in the proof of Lemma~\ref{neighbors-RB} for left side neighbors.}%
\label{left}%
\end{minipage}%
\end{figure}


 We denote the six edges of the belt by
$\alpha^L,
\beta^L,\gamma^L,\alpha^R,\beta^R,
\gamma^R$; as shown in Fig.~\ref{belt2}.
Each edge is cut into six sections by the intersections with
the other edges:
Two sections are little ``stumps'' at the end vertices.
One section belongs to the boundary between the belt and the \emph{outside}.  
The remaining
three sections form the \emph{top part} of the edge.
We say that a new (directed) edge crosses a belt edge
\emph{from the outside} or \emph{from the top} if it crosses the boundary
part or the top part in the appropriate direction.

\begin{lemma}\label{belt}
In a simple drawing that contains
$G_{RB}$, the following holds:
  \begin{enumerate}
\item 
If an edge crosses a belt edge from the top
or from the outside, it must terminate inside the belt.
  \item 
  No edge can cross a belt.
%
  \end{enumerate}
\end{lemma}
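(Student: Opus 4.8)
\textbf{Proof plan for Lemma~\ref{belt}.}

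The plan is to analyze the combinatorial structure of the belt as a planar subdrawing and track where a directed new edge can go once it has crossed into the shaded region. The belt, being formed by $12$ vertices and $6$ edges, partitions the relevant portion of the cylinder into a bounded collection of faces: the big left face, the big right face, the six vertex ``stumps,'' and a small number of quadrilateral-like cells in the ``top'' region where the $\alpha,\beta,\gamma$ edges on the left cross those on the right. First I would fix, once and for all, this face structure from Fig.~\ref{belt2}, labelling each bounded cell and recording, for each belt edge, which of its six sections (two stumps, one boundary section, three top sections) borders which cell. This bookkeeping is the backbone of both parts.

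For part~1, suppose a new directed edge $e$ enters the belt by crossing some belt edge $\sigma$ from the top or from the outside, arriving in a cell $F$. The key observation is that every bounded cell of the belt is bounded \emph{entirely} by sections of belt edges together with vertex stumps — so to leave $F$, the edge $e$ must cross another belt edge (or hit a belt vertex and stop, which is fine). I would then argue by following $e$ cell by cell: each time it crosses a belt edge it "uses up" that edge, because in a simple drawing $e$ may meet each belt edge at most once; and because there are only six belt edges, this process cannot continue forever. The crucial point is to rule out that $e$ escapes to the big left or big right face. That would require $e$ to cross a \emph{boundary} section of some belt edge in the outward direction; I would show that, having entered from the top or outside, the parity/orientation of the crossings forces $e$ always to have the belt on the correct side, so a boundary section can only be crossed inward, never outward. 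Hence $e$ can never reach either big face and, running out of belt edges to cross, must terminate at a vertex inside the belt.

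For part~2, an edge that \emph{crosses} the belt would have to start in one big face, enter the belt, and exit into the other big face (or come back to the same one). By part~1, once such an edge crosses any belt edge from the outside it is trapped and must terminate inside; so it cannot emerge on the far side. I would also handle the degenerate route where the edge tries to "sneak past" an end vertex rather than crossing a belt edge: since the belt occupies a full annular band on the cylinder and its only gaps are at the six vertices, passing a vertex is not possible without that vertex being an endpoint — and a through-edge has its endpoints in the big faces, not on the belt. Combining these, no edge can have one endpoint strictly on each side of the belt.

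The main obstacle I anticipate is the case analysis of the top region: the crossing pattern of the three left edges $\alpha^L,\beta^L,\gamma^L$ with the three right edges $\alpha^R,\beta^R,\gamma^R$ determines exactly which bounded cells exist and which sections adjoin which, and a careless treatment could miss a cell through which an edge slips out sideways along the cylinder. So the real work is to pin down that adjacency structure precisely (reading it off Fig.~\ref{belt2}) and then verify the "orientation is preserved" claim for boundary sections rigorously, rather than by appeal to the picture. Once that is done, the counting argument on the six belt edges closes both parts cleanly.
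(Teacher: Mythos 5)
Your plan is correct and is essentially the unpacked version of the paper's own argument: the paper compresses your cell-by-cell tracking into a single observation (crossing one belt edge from the top or from the outside forces either termination at one of two specific vertices or a subsequent top-crossing of one of two specific other belt edges, never an outward crossing of a boundary section), and then concludes exactly as you do that simplicity caps the chain at six crossings, so the edge is trapped and part~2 follows. The only caveat is that your ``parity/orientation'' justification for why a boundary section cannot be crossed outward is not really a parity argument but the same adjacency case analysis read off Fig.~\ref{belt2} that you correctly identify elsewhere as the real work.
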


\begin{proof}
We start with the following observation:
  If an edge crosses $\alpha_L$
  from the outside or from the top, and 
  it does not terminate at $B_i^L$ or at $B_i^R$, 
  then it must later cross
  $\gamma^L$ or $\gamma^R$ from the top.
This observation holds symmetrically for $\alpha_R$ instead of
$\alpha_L$, and cyclically for the other four belt edges.
Hence, any edge that ``enters'' the belt from the outside
has to continue by crossing another edge from the 
belt from the top. There is no way to leave the belt 
without crossing some edge twice.
\end{proof}






After these preparations,
we are ready to prove Lemma~\ref{neighbors-RB}.

\noindent\textit{Proof of Lemma~\ref{neighbors-RB}.}
Let us first look at the potential neighbors on the left side.  A
connection from $A^L_i$ to levels $j\le i-3$ is impossible, because
it would have to cross a belt.
For the vertices at level~$i-2$ we observe the following 
(see Fig.~\ref{left} for the edge numbers we are referring to):
When we start from $A^L_i$ we cannot cross the right boundary of the belt
formed by levels $i-1$ and $i$, because then we
would have to cross the whole belt to reach level $i-2$.
If we cross edge~1 or~2 from the top, then, by Lemma~\ref{belt},
we are restricted to the belt defined by level~$i-1$ and~$i$. 
Thus we can regard edge~1 and 2 as
closed from the top. (These edges can later be crossed
from the bottom.) 
We successively conclude that the new edges must cross
the purple parts of the edges 3, 4, 5, and~6.
The endpoints
$B^R_{i-2},B^L_{i-2},A^L_{i-2}$
 of the edges 4, 5, and 6 cannot be taken.
$C^L_{i-2}$ and $C^R_{i-2}$ are enclosed
in a small face delimited by the edges 4, 5, and 6, and cannot be reached.
$A^R_{i-2}$ is thus the only reachable vertex of level $i-2$.

%

Let us turn to the potential neighbors on the right side.  A
connection from $A^L_i$ to levels $j\ge i+3$ is impossible, because
it would have to cross a belt.
Vertices at level $i+2$ cannot be reached either, because 
(i) if we cross the edge forming the left boundary of the belt spanned by the vertices of level~$i$ and~$i+1$
we cannot cross this belt anymore and therefore cannot reach level $i+2$, and (ii) if we cross one 
of the edge in the face that contains $A^L_i$ from the top (edge labeled~1 and~2 in~Fig.~\ref{right-right}),  then, by Lemma~\ref{belt},
we are also restricted to this belt. Thus we are restricted to the shaded region in~Fig.~\ref{right-right}.

\begin{figure}[tb]
  \begin{subfigure}[b]{.48\columnwidth}
    \centering
    \includegraphics[scale=.85]{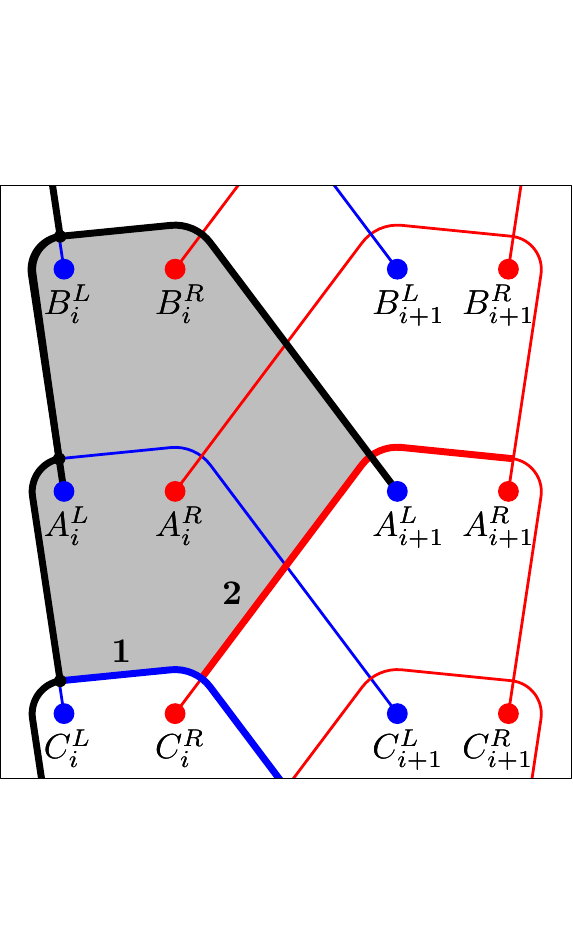}
\vskip -10mm
\hrule height0pt
    \caption{Level $i+2$ cannot be reached from
  $A^L_i$.}
    \label{right-right}
  \end{subfigure}
  \hfill
  \begin{subfigure}[b]{.48\columnwidth}
    \centering
    \includegraphics[scale=.85]{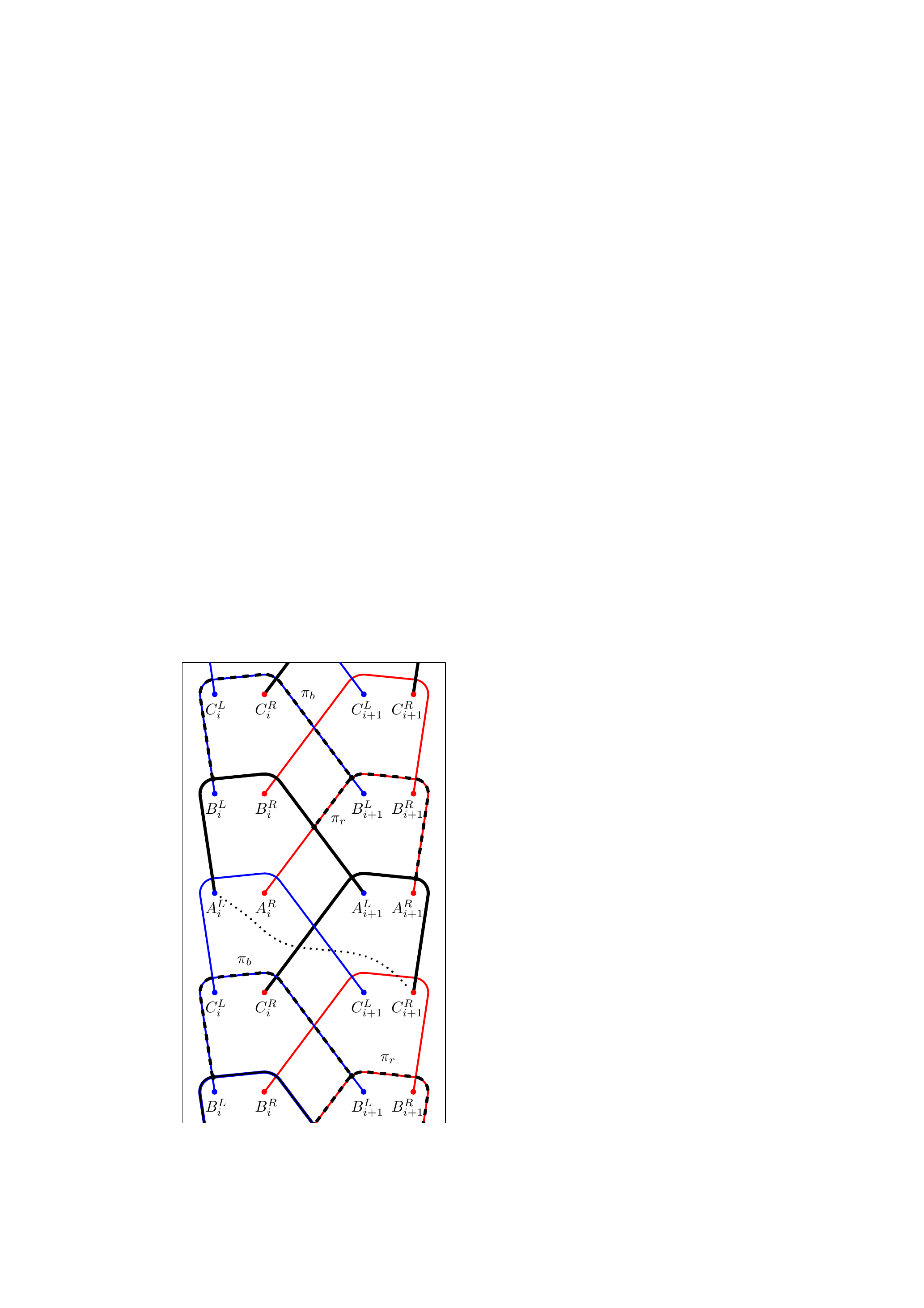}
\vskip -6mm
\hrule height0pt
    \caption{$C^R_{i+1}$ cannot be reached from
  $A^L_i$.}
    \label{right}
  \end{subfigure}
  \caption{Restricting the neighbors to the right.}
\end{figure}
%
%
%
%
%

The vertices $B^R_{i+1}$ and $C^R_{i+1}$ also cannot be neighbors of $A^L_i$.
We discuss the exclusion of $C^R_{i+1}$ as a potential neighbor -- the case 
for $B^R_{i+1}$ is symmetric. 
The edges incident to $A^L_i$ and $C^R_{i+1}$, which we call the \emph{closed edges} cannot be crossed.
The closed edges are depicted as thicker curves in Fig.~\ref{right}.
Consider the portion  of the red edge $\pi_r$ that runs between $A_{i}^R$ and $A_{i+1}^R$ above the closed edges
(see Fig.~\ref{right}).
The curve $\pi_r$ bounds a region below in which the remaining 
edges bounding this region are parts of the closed edges.
Hence, if we enter this region we cannot leave and therefbrachistochroneore we cannot cross $\pi_r$ (see Fig.~\ref{right}).
Let us now consider the partial edge $\pi_b$ that runs between $B_{i+1}^L$ and $B_i^L$
above the closed edges and $\pi_r$. Again, there is a region whose boundary is part
of the closed edges and also $\pi_b$. To enter and leave this region we have to cross either 
one of the closed edges or $\pi_r$, or we have to cross $\pi_b$ twice. Since all these options are
invalid, we have to avoid this region, and therefore are not allowed to cross $\pi_b$.  
We observe that the closed edges
together with $\pi_b$ and $\pi_r$ leave  $A^L_i$ and $C^R_{i+1}$ in different faces, which
shows that these vertices cannot be neighbors unless we cross one edge twice.
\qed

\subsection{The Graph $G$}

Now we turn back to $G$. The additional green edges 
exclude some of the possible edges from
the Lemma~\ref{neighbors-RB}. To analyze the drawing 
of $G$ we have to treat the 
left and right vertices differently.

\begin{lemma}\label{neighbors}
  \begin{enumerate}
\item 
  The $13$ potential neighbors of a typical vertex $A^L_i$
in $G$ are
all $5$ vertices of level $i$
\textup(%
$B^L_{i},C^L_{i};$
$A^R_i,B^R_{i},C^R_{i}$\textup),
all but one vertex of level $i-1$
\textup(%
$A^L_{i-1},B^L_{i-1};$
$A^R_{i-1},B^R_{i-1},C^R_{i-1}$
\textup)
 plus the $3$ vertices
$A^R_{i-2};$
$A^L_{i+1},C^L_{i+1}.$
\item
  The $15$ potential neighbors of a typical vertex $A^R_i$
in $G$ are
all $11$ vertices of levels $i$ and $i+1$
\textup(%
$A^L_{i}, B^L_{i},C^L_{i};$
$B^R_{i},C^R_{i};$
$A^L_{i+1},B^L_{i+1},C^L_{i+1};$
$A^R_{i+1},B^R_{i+1},C^R_{i+1}$~\textup)
plus the $4$ vertices
$A^R_{i-1},B^R_{i-1},C^R_{i-1};$
$A^L_{i+2}.$
\end{enumerate}
\end{lemma}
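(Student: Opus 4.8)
The plan is to mimic the structure of the proof of Lemma~\ref{neighbors-RB}, but now keeping track of how the three green cycles $A^L_jB^L_jC^L_j$ interact with the red/blue skeleton. The starting point is that every potential neighbor of $A^L_i$ or $A^R_i$ in $G$ is already a potential neighbor in $G_{RB}$, since $G_{RB}\subseteq G$ and adding edges can only shrink the set of admissible endpoints. So Lemma~\ref{neighbors-RB} immediately restricts us to the $16$-vertex list, and the only task is to rule out the extra vertices: for part~(1) we must eliminate $C^L_{i-1}$, $B^L_{i+1}$, and $A^R_{i+1}$ from the left/right side of $A^L_i$; for part~(2), working with the representative red vertex $A^R_i$, we must redo the $G_{RB}$ analysis with the green edges present and show the list shrinks to the stated $15$ vertices (losing one $A^L$ neighbor, consistent with the remark in the text about $A^R_{i+1}$ losing $A^L_\ell$).

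First I would set up the local picture around $A^L_i$: the green triangle on level~$i$ through $A^L_i,B^L_i,C^L_i$, the green triangles on levels $i-1$ and $i+1$, together with the belt edges between consecutive layers. Since $A^L_i$ itself lies on the level-$i$ green triangle, the two green edges $A^L_iB^L_i$ and $A^L_iC^L_i$ are now \emph{closed edges} at $A^L_i$ and cannot be crossed by any new edge out of $A^L_i$; the third green edge $B^L_iC^L_i$, together with these, bounds a small face. The key local move is: a new edge leaving $A^L_i$ toward level $i-1$ must pass around one side of the level-$(i)$ green triangle, and then the belt argument of Lemma~\ref{belt} confines it; the green triangle on level $i-1$ then blocks $C^L_{i-1}$ (it is enclosed by green and red/blue edges, analogously to how $C^L_{i-2},C^R_{i-2}$ were enclosed in the proof of Lemma~\ref{neighbors-RB}). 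Symmetrically, on the $i+1$ side, the green triangle through $A^L_{i+1},B^L_{i+1},C^L_{i+1}$ together with the belt between levels $i$ and $i+1$ seals off $B^L_{i+1}$, leaving only $A^L_{i+1}$ and $C^L_{i+1}$ reachable. The exclusion of $A^R_{i+1}$ from the $G$-neighborhood of $A^L_i$: here I would reuse the $\pi_r,\pi_b$ region argument from the end of the proof of Lemma~\ref{neighbors-RB} (which excluded $C^R_{i+1}$ and $B^R_{i+1}$), now noting that the extra green edges only add more closed edges/boundary and hence can only make the trap tighter, so the same "$A^L_i$ and $A^R_{i+1}$ lie in different faces" conclusion goes through.

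For part~(2) I would exploit the rotational symmetry $A\to B\to C\to A$ (which the green edges preserve) and the residual translational symmetry to argue about a single representative red vertex, say $A^R_i$ — or, matching the wording in the section, $A^R_{i+1}$ viewed as typical. Relative to $A^L_i$, the red vertex $A^R_i$ is an endpoint of the \emph{red} paths rather than the \emph{blue} paths, and it lies on no green triangle (green cycles only go through the left vertices). Thus, compared with the $G_{RB}$ list for a typical vertex re-centered at $A^R_i$, the only new obstructions come from green triangles on nearby \emph{left} levels. I would go through the $G_{RB}$-list of $16$ potential neighbors of $A^R_i$ and check which are cut off: the green triangles on levels $i$, $i+1$, and $i+2$ are the relevant ones, and exactly one left-vertex candidate (playing the role that $A^L_\ell$ plays for $A^R_{i+1}$ in the text) gets enclosed the way $C^L_{i-1}$ was in part~(1). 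The belt lemma again does all the heavy confinement work; the green edges supply the final "this vertex is trapped in a small face" steps.

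The main obstacle I expect is purely the case bookkeeping: there are several "new edge leaves $A^L_i$, goes around this side, gets confined to that belt, and the target vertex is enclosed by such-and-such edges" arguments, and one has to be careful that the confining boundary curves one names really do bound a face of the drawing (so that Lemma~\ref{belt}'s mechanism — you can enter but not leave without a double crossing — applies). I would handle this with a small number of figures, one for the left-side exclusions around $A^L_i$ and one for the right-side exclusions, each highlighting the closed edges and the partial edges $\pi_r,\pi_b$ exactly as in Fig.~\ref{right}; everything else reduces to Lemma~\ref{belt} plus the already-established $G_{RB}$ bound. The conceptual content is entirely in place — what remains is verifying that each of the three "extra" vertices on the left and the one "extra" vertex on the right is sealed off by an explicit small face bounded by closed edges, green edges, and top parts of belt edges.
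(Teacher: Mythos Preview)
Your overall strategy matches the paper's: start from the $16$-vertex list of Lemma~\ref{neighbors-RB} and exclude $C^L_{i-1}$, $B^L_{i+1}$, $A^R_{i+1}$ (for $A^L_i$) and one left vertex (for $A^R_i$) via closed-edge/region arguments. But two of your specific exclusion mechanisms are not right as stated.

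For $A^R_{i+1}$ you write that you will ``reuse the $\pi_r,\pi_b$ region argument \ldots\ so the same `$A^L_i$ and $A^R_{i+1}$ lie in different faces' conclusion goes through.'' That conclusion was never drawn in $G_{RB}$: there $A^R_{i+1}$ \emph{is} reachable from $A^L_i$, so ``the trap is tighter'' is not an argument---the green edges at $A^L_i$ are what \emph{create} the separation, not what tighten a pre-existing one. The paper handles this with a fresh (and in fact simpler) argument: with the two green edges at $A^L_i$ now among the closed edges, a single partial blue edge $\pi_b\subset B^L_iB^L_{i+1}$ bounds, together with closed edges only, a region one cannot leave; closed edges plus $\pi_b$ already separate $A^L_i$ from $A^R_{i+1}$. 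For $B^L_{i+1}$ (and, by symmetry, $C^L_{i-1}$) your phrasing ``enclosed by the green triangle on level $i\pm1$'' is off, since the target vertex is a \emph{vertex} of that triangle, not a point inside it. The paper's argument instead uses the green edges at $A^L_i$ and at $B^L_{i+1}$ as closed edges and then chases the new edge through three successive regions bounded by the \emph{red} partial edges $\pi_A,\pi_C,\pi_B$ (on $A^R_iA^R_{i+1}$, $C^R_iC^R_{i+1}$, $B^R_iB^R_{i+1}$), ending in a dead end. So the belt lemma plays no role in these two exclusions; the work is done by the enlarged set of closed edges together with one (resp.\ three) red/blue partial edges.
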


The claim immediately follows from the next
two lemmas.

\begin{figure}[tb]
  \begin{subfigure}[b]{.48\columnwidth}
  \centering
  \includegraphics[scale=.85]{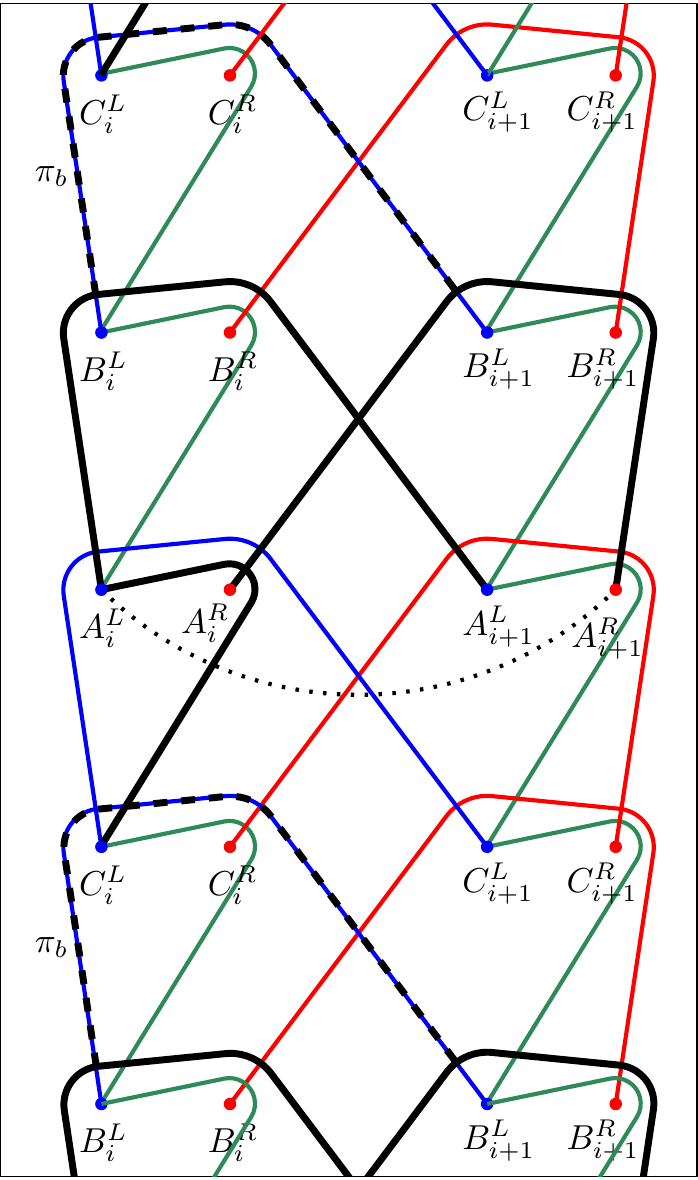}
  \caption{$A_i^L$ and $A^R_{i+1}$ cannot be connected.}
  \label{straight-right}
  \end{subfigure}
  \hfill
  \begin{subfigure}[b]{.48\columnwidth}
      \centering
  \includegraphics[scale=.85]{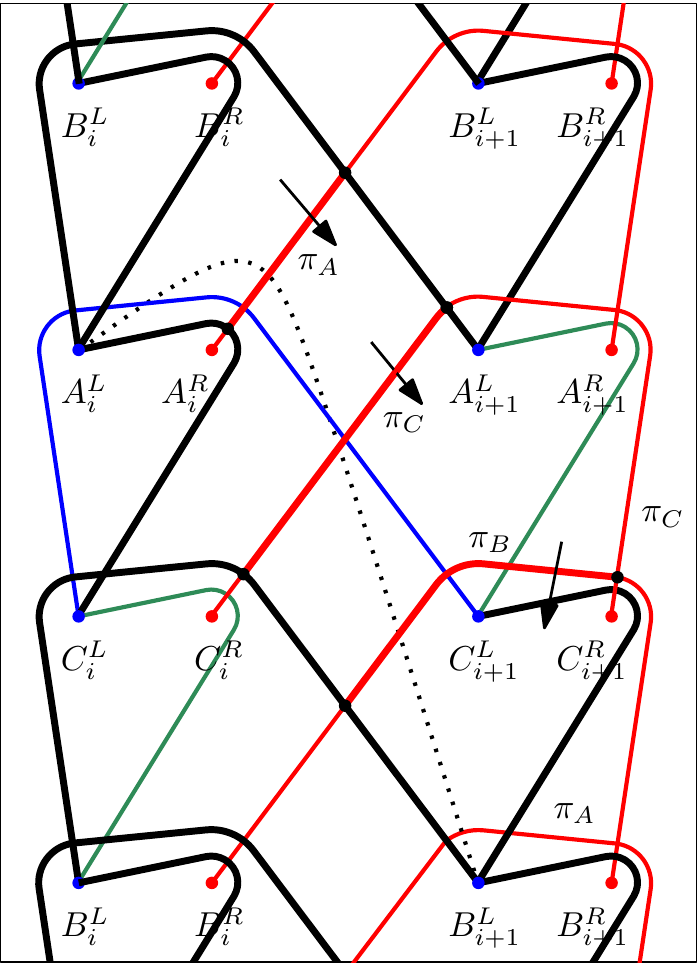}
  \caption{$A_i^L$ and $B^L_{i+1}$ cannot be connected.}
  \label{diagonal}
  \end{subfigure}
  \caption{Restricting more neighbors by putting back the green edges.}
  
\end{figure}

\begin{lemma}
  In a simple extension  of $G$, $A^R_{i+1}$ cannot be a neighbor
of $A_i^L$.
\end{lemma}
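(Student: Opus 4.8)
The plan is to reduce this to an application of Lemma~\ref{belt} together with Lemma~\ref{neighbors-RB}, by analyzing how the three green edges incident to $A_i^L$, $B_i^L$, $C_i^L$ (and possibly the green triangle at level $i+1$) cut up the region in which a hypothetical edge from $A_i^L$ to $A_{i+1}^R$ must travel. By Lemma~\ref{neighbors-RB} we already know that in $G_{RB}$ the vertex $A_{i+1}^R$ is one of the sixteen potential neighbors of $A_i^L$ (it is $A^R_{i+1}$ in the list), so without the green edges such a connection is possible; the task is to show that \emph{every} curve realizing it in $G_{RB}$ is forced to cross one of the newly added green edges twice (or to cross a green edge and then be trapped). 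This is exactly the kind of ``trapping'' argument already used in the proof of Lemma~\ref{neighbors-RB} for $C^R_{i+1}$, so I would reuse that machinery: identify the ``closed edges'' (all edges incident to $A_i^L$ and to $A_{i+1}^R$, which now include the green edges at $A_i^L$ and the red edge, and the blue edges, etc.), together with suitable partial edges $\pi$, and show they separate $A_i^L$ from $A_{i+1}^R$ into distinct faces.

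Concretely, first I would fix the belt between levels $i$ and $i+1$ and recall from the proof of Lemma~\ref{neighbors-RB} (right-side analysis) that any edge leaving $A_i^L$ toward the right is confined: it cannot cross the left boundary of that belt and get across it, and crossing edges $1,2$ of Fig.~\ref{right-right} from the top also traps it in the belt. So the edge from $A_i^L$ must stay in the shaded region of Fig.~\ref{right-right} — essentially it must thread between the level-$i$ vertices and the level-$(i+1)$ vertices, crossing the ``top parts'' of the right-side red/blue path edges in the prescribed order. Within that corridor, the green triangle $A_i^L B_i^L C_i^L$ now sits as an obstacle: its three edges are incident to $A_i^L$, hence closed, and they seal off a portion of the face around $A_i^L$ on the side facing the interior of the cylinder. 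The key geometric claim is that the only way out of the immediate neighborhood of $A_i^L$ toward $A^R_{i+1}$ — in $G_{RB}$ — passed through that now-sealed portion, or else is blocked by the green triangle at level $i+1$ combined with the partial red edge $\pi_r$ running between $A_i^R$ and $A_{i+1}^R$.

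The main steps in order: (1) restrict attention, via Lemma~\ref{belt}, to the belt between levels $i$ and $i+1$, and note the edge from $A_i^L$ may not cross that belt; (2) enumerate the faces incident to $A_i^L$ inside this belt in the drawing of $G$ (not $G_{RB}$), using the picture in Fig.~\ref{straight-right}, and observe that the green edges $A_i^LB_i^L$ and $A_i^LC_i^L$ are incident to $A_i^L$ hence closed; (3) identify the partial red edge $\pi_r$ between $A_i^R$ and $A^R_{i+1}$ lying ``above'' the closed edges, argue as in Lemma~\ref{neighbors-RB} that the region it bounds against the closed edges is a trap, so $\pi_r$ cannot be crossed; (4) conclude that the closed edges incident to $A_i^L$ (now including its green edges) together with $\pi_r$ (and, if needed, one further partial edge $\pi$ analogous to $\pi_b$ from the earlier proof, taken from the blue or green edges at level $i+1$) separate $A_i^L$ from $A^R_{i+1}$ into different faces, so no simple curve between them exists. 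The main obstacle I anticipate is step (2)--(4): getting the combinatorics of the local face structure right and choosing the correct partial edges $\pi$ so that the separation is airtight — in particular making sure that the green edges really do block the one escape route that existed in $G_{RB}$, rather than merely constraining it. This is precisely where a careful reading of Fig.~\ref{straight-right} is indispensable, and where the proof in the paper presumably spends its effort; the belt lemma does the heavy lifting of confining the edge, but the final separation is a hands-on planarity argument in a bounded region.
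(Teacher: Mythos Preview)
Your approach is in the right spirit and would ultimately succeed, but you are making it considerably harder than the paper does, and one of your steps rests on a small confusion. The paper's proof is a two-line trapping argument using a \emph{single} partial edge $\pi_b$: the portion of the blue edge $B_i^L B_{i+1}^L$ lying above the closed edges. Since the green edges $A_i^L B_i^L$ and $C_i^L A_i^L$ are now among the edges incident to $A_i^L$ and hence closed, the cell ``below'' $\pi_b$ is bounded only by $\pi_b$ and closed edges, so it is a trap; and the closed edges together with $\pi_b$ already separate $A_i^L$ from $A^R_{i+1}$. No appeal to Lemma~\ref{belt} is made, and neither the green triangle at level $i+1$ nor any red partial edge is used.

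Your step~(3) is where the confusion sits: the red edge between $A_i^R$ and $A^R_{i+1}$ is incident to $A^R_{i+1}$, so it is \emph{already} a closed edge---there is nothing to prove about $\pi_r$. (In the proof of Lemma~\ref{neighbors-RB} for $C^R_{i+1}$ that red edge was not closed because the target vertex was $C^R_{i+1}$, not $A^R_{i+1}$; that is why $\pi_r$ required a separate trapping argument there.) What you tentatively call the ``one further partial edge, if needed'' is in fact the only nontrivial ingredient, and it is the blue $\pi_b$ described above. Once you drop the belt confinement and the redundant $\pi_r$ step, your outline collapses exactly to the paper's proof.
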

\begin{proof}
We call the edges incident to $A^R_{i+1}$ and $A_i^L$ the \emph{closed edges}.
Let $\pi_b$ the portion of the edge connecting $B_i^L$ with $B_{i+1}^L$ 
that runs above the closed edges (see Fig.~\ref{straight-right}).
The cell ``below'' $\pi_b$ is only bounded by $\pi_b$ and the closed edges. Hence, we cannot leave this cell
once we have entered. As a consequence we cannot cross $\pi_b$. Since the closed edges
together with $\pi_b$ disconnect  $A^R_{i+1}$ and $A_i^L$, these two vertices cannot be neighbors.
\end{proof}

\begin{lemma}
  In a simple extension  of $G$, $B^L_{i+1}$ cannot be a neighbor
of $A_i^L$.
\end{lemma}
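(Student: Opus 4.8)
The plan is to mimic the style of the two preceding lemmas: identify a small collection of \emph{closed edges} (all edges incident to the two vertices $A_i^L$ and $B_{i+1}^L$ we are trying to connect), note that a new edge can cross none of them, and then exhibit a Jordan-type separator made up of arcs of the closed edges together with a few carefully chosen sub-arcs $\pi$ of other edges of $G$, such that $A_i^L$ and $B_{i+1}^L$ end up in different regions. The key point, as in the proof for $C^R_{i+1}$ in Lemma~\ref{neighbors-RB}, is that each auxiliary sub-arc $\pi$ bounds (together with parts of the closed edges and the previously placed arcs) a region that a directed edge cannot both enter and leave without crossing a closed edge or recrossing $\pi$; hence the auxiliary arc is effectively ``closed'' too, and we can add it to the separator.

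Concretely, I would first put back the relevant green triangles. The green triangle $A_i^L B_i^L C_i^L$ is incident to $A_i^L$, so its three edges are among the closed edges; likewise the green triangle $A_{i+1}^L B_{i+1}^L C_{i+1}^L$ contributes closed edges at $B_{i+1}^L$. Together with the blue path edges and red path edges incident to $A_i^L$ and $B_{i+1}^L$, these closed edges already fence $A_i^L$ into the union of the two faces adjacent to it. The remaining ``gap'' through which a connection could escape toward $B_{i+1}^L$ is bounded by a bounded number of sub-arcs of non-closed edges — I expect the portion $\pi_b'$ of the blue edge $B_i^LB_{i+1}^L$ running above the closed edges (exactly the arc used in Fig.~\ref{straight-right}), possibly together with one further arc (a portion of the blue edge $A_i^LA_{i+1}^L$ or of a green edge of layer $i$ or $i+1$), to be enough. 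I would introduce these arcs one at a time, in the order in which a hypothetical escaping edge would meet them, and for each one argue the ``region-trap'' property: the cell on the far side is bounded only by closed edges and arcs already declared uncrossable, so the edge cannot reach it, hence it cannot cross the new arc either.

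The main obstacle will be getting the \emph{geometry of the drawing} right — i.e., reading off from Fig.~\ref{diagonal} precisely which sub-arc of which edge forms each side of the trapping region, and checking that the closed edges plus these arcs genuinely form a closed curve separating the two vertices (rather than leaving a sliver through which one could still sneak). This is exactly the kind of case analysis where one has to be careful that the $5$ closed edges at $A_i^L$ and the $4$ closed edges at $B_{i+1}^L$, the blue arc $\pi_b$, and at most one more arc actually close up; I would verify this by tracing the boundary of the face containing $A_i^L$ in the drawing $G$ and confirming that every boundary arc is either closed or is one of the declared $\pi$-arcs. Once the separating curve is in hand, the conclusion is immediate: any simple extension edge from $A_i^L$ would have to cross this curve, and crossing it means crossing some edge of $G$ a second time (or recrossing $\pi_b$), which is forbidden. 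Hence $B_{i+1}^L$ cannot be a neighbor of $A_i^L$, which, combined with the previous lemma and Lemma~\ref{neighbors-RB}, yields the list in Lemma~\ref{neighbors}.
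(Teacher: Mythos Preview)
Your overall strategy --- declare the edges incident to $A_i^L$ and $B_{i+1}^L$ \emph{closed}, then successively add ``trapping'' sub-arcs $\pi$ until the two vertices are separated --- is exactly the paper's approach. The gap is in the specific arcs you propose.

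First, the blue edge $B_i^LB_{i+1}^L$ that you single out as your main auxiliary arc $\pi_b'$ is \emph{already} a closed edge: it is incident to $B_{i+1}^L$. Likewise $A_i^LA_{i+1}^L$ is incident to $A_i^L$ and hence already closed. So neither of the arcs you name adds anything to the separator. (You also overstate the green contribution: only two of the three green edges at level~$i$ touch $A_i^L$; the edge $B_i^LC_i^L$ is not closed.) With the genuinely closed edges in hand, $A_i^L$ and $B_{i+1}^L$ are still in the same region of the arrangement, so some real auxiliary arcs are needed --- and they do not live on the blue or green side.

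The paper's proof uses \emph{three} auxiliary arcs, all of them red path edges between levels $i$ and $i+1$: it sets $\pi_A = A_i^RA_{i+1}^R$, $\pi_C = C_i^RC_{i+1}^R$, $\pi_B = B_i^RB_{i+1}^R$, and argues that any edge leaving $A_i^L$ is forced to cross them in the order $\pi_A,\pi_C,\pi_B$, after which it lands in a cell bounded by a closed edge together with $\pi_A$ and $\pi_C$ and is stuck. So ``possibly one further arc'' is an underestimate; three successive traps are needed, and they come from the red edges, not the blue or green ones you were looking at.
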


\begin{proof}
All edges that are incident to either $B^L_{i+1}$ or $A_i^L$
cannot be crossed. These edges are drawn as black curves
in Fig.~\ref{diagonal} and are now considered as being the \emph{closed} edges.
The only chance to connect  $A_i^L$ with $B^L_{i+1}$
is to enter the region that is bounded by the closed edges and the
edge $\pi_A$ from $A_i^R$ to $A_{i+1}^R$. Thus we have to cross
this edge to leave this face. This leads us to a region that is bounded
by the closed edges, $\pi_A$ and the edge $\pi_C$ from $C_i^R$ to $C_{i+1}^R$.
Clearly we have to cross $\pi_C$ to leave this region. Now we have 
entered a region that is bounded by a closed edge, $\pi_C$ and
the edge $\pi_B$ that connects $B_i^R$ with $B_{i+1}^R$. To leave
this region we have to cross $\pi_B$, which brings us to a region that 
is bounded by a closed edge, $\pi_A$ and $\pi_C$. We observe that 
we are stuck in this region and hence, cannot reach $B^L_{i+1}$.
\end{proof}

By symmetry,
 $C^L_{i-1}$ and  $A_i^L$ cannot be neighbors, and 
this concludes the proof of Lemma~\ref{neighbors}.
Moreover, as a consequence of Lemma~\ref{neighbors}  the average degree
in a saturated extension of $G$ is at most $14$, which proves
Theorem~\ref{SimpleTheorem} when the number $n$ of vertices is a
multiple of~6.

We can determine the vertex degrees more carefully.
If $k\ge 3$,
then
\begin{enumerate}
\item 
the degrees of $A_1^L,B_1^L,C_1^L$  are at most $7$,
\item 
the degrees of $A_1^R,B_1^R,C_1^R$  are at most $12$,
\item 
the degrees of $A_2^L,B_2^L,C_2^L$  are at most $12$,
\item 
the degrees of $A_i^R,B_i^R,C_i^R$  are at most $15$, when
$1<i<k-1$,
\item 
the degrees of $A_i^L,B_i^L,C_i^L$  are at most $13$, when
$2<i<k$,
\item 
the degrees of $A_{k-1}^R,B_{k-1}^R,C_{k-1}^R$  are at most $14$,
\item 
the degrees of $A_k^L,B_k^L,C_k^L$  are at most $11$,
\item 
the degrees of $A_k^R,B_k^R,C_k^R$  are at most $8$.
\end{enumerate}
A straightforward calculation gives that
any
saturated extension of $G$ has at most $7n-30$
edges.
For $k=2$, the degrees of $X_1^L,X_1^R,X_2^L,X_2^R$ are bounded by
$7,11,10,8$, respectively, for a total of $54$ edges, which also agrees
with the formula $7n-30$.
Hence, for any $n\ge 12$ that is a multiple of~6, there exists a
saturated simple topological graph with $n$ vertices and at most $7n-30$ edges.

Our construction can be extended 
to any vertex size by \emph{cloning} some vertices.
Take a saturated simple topological graph
and any vertex $P$ of it. Next to $P$ we add $\rho$
new copies of $P$ -- the clones.
Connect the neighbors of $P$ to each clone
by edges that are non-intersecting
perturbations of the edges incident to~$P$.
By this we obtain a simple drawing.
 A saturation of this drawing
can include as additional edges
only edges among $P$ and its clones.

For $n\ge 12$,
we can write $n$ as $6k+\rho$ where $0 \le \rho \le 5$.
If $\rho=0$, we are done.
If $\rho\ge 1$, then start with a construction for a saturated simple
topological graph with $6k$ vertices.
Add $\rho$ clones of its lowest-degree vertex $P$, and saturate.
In our construction, the lowest degree is 7. Cloning such a vertex
$\rho$ times adds up to $7\rho+\binom{\rho+1} 2$ additional edges after saturation.
Since $\rho\le 5$,
the number of edges is bounded by
$$
7(6k)-30 + 7\rho+ {\textstyle \binom{\rho+1}2}
\le
7(6k+\rho)-30 + 15
=
7(6k+\rho)-15 < 7n
$$
The resulting simple topological graph
proves Theorem~\ref{SimpleTheorem} for $n\ge12$.
If $n\leq 11$, then the bound of Theorem~\ref{SimpleTheorem} holds since even the complete graph has at most
$\binom n2 \le 5n$ edges.

%
%
%
%
%








\section{Saturated 2-simple topological  graphs with few edges}
\label{sec:2simple}
In this section we construct sparse saturated drawings in the 2-simple case. 
We first review an auxiliary structure called grid-block. 
Then we use it to construct an efficient blocking edge configuration. 
We finish with explicit constructions of sparse saturated 2-simple drawings.
 
\subsection{The grid-block configuration} 
To begin, we study a drawing of 6 edges (three red edges and three black edges)  
as depicted in Fig.~\ref{fig:32gridfaces}. 
The drawing consists of three disjoint horizontal segments 
representing the red edges $r_1$, $r_2$, $r_3$, 
and three disjoint black edges $b_1$, $b_2$, $b_3$ that are drawn such that 
one crosses (in order) $r_1,r_2,r_3,r_1,r_2,r_3$, the other
$r_2,r_3,r_1,r_2,r_3,r_1$, and the last one $r_3,r_1,r_2,r_3,r_1,r_2$. There
are no other crossings in the drawing. Note that the configuration superimposes 
a grid. We call such an arrangement of edges a  {\emph {grid-block}}. These 
blocks have been used by Kyn\v{c}l~et~al.\ as building blocks  in their saturated graphs~\cite{kprt}. In the terminology 
of Kyn\v{c}l~et~al., our grid-blocks are named (3,2)-grid-blocks. 

As done in the previous section we consider the graph as drawn on the cylinder.
More precisely, we draw the graph inside a rectangle 
in which we identify two sides
in opposition  (\emph{bottom side} and \emph{top side}), 
while the other sides are named \emph{right side} and \emph{left side}. 
If an edge uses the transition across the bottom/top edge 
we say that it \emph{wraps around}.
In the following we assume that the grid-blocks are drawn 
such that only the black edges wrap around.
We label every face of the drawing of a grid-block with 2 numbers. 
These numbers refer to the coordinates of the 
(dual) superimposed grid, with $(0,0)$ being the label of the 
face that contains the two bottom most endpoints of the black
edges on the left side. All ``vertical'' coordinates are considered modulo~3.

\begin{figure}[htb]
\centering
\includegraphics{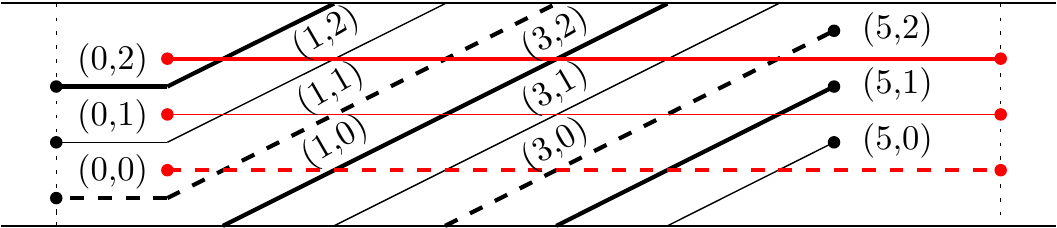}
\caption{A  grid-block with some labeled faces.}
\label{fig:32gridfaces}
\end{figure}

Throughout the section we study {\em paths} connecting the left and the right sides of the cylinder and passing through some blocking configurations. By a {\em path} in this context we always mean a path in a graph dual to the arrangement of the blocking configuration in question.

Kyn\v{c}l et al.\ observed that every path connecting the left with the right side of the cylinder
has to intersect the edges of grid-block at least 5~times. For our construction we
need a stronger statement which is presented in Lemma~\ref{lm:simplification}. The following 
lemma simplifies the treatment of paths passing through the grid-block.

\begin{lemma}
\label{lm:simplification}
Let $\gamma$ be a path crossing the grid-block that starts
in face $(0,i)$ and ends in face $(5,j)$ and that 
never visits the faces $(0,\cdot)$, $(5,\cdot)$ again, 
see Fig.~\ref{fig:32gridwithpath}.
Then $\gamma$ can be transformed, keeping its endpoints fixed, 
to a path $\widetilde\gamma$ 
such that $\widetilde\gamma$:
\begin{enumerate}
\item crosses (with the same or smaller multiplicity) only the edges of the grid-block 
crossed by $\gamma$,
\item first walks between the faces $(0,i)$, ${0\le i\le 2}$,
then crosses some black edges to the right, 
passing from a face~$(k,i)$ to a face~$(k+1,i)$,
then crosses some red edges upwards, 
passing from a face~$(k,i)$ to a face~$(k+1,i+1)$.
\end{enumerate} 
\end{lemma}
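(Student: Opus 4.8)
The plan is to analyze the structure of $\gamma$ as a walk in the dual graph of the grid-block and normalize it in two stages. First I would set up notation: each step of the dual path crosses exactly one edge of the grid-block, either a red edge (moving between vertically adjacent faces, changing the second coordinate by $\pm1$ mod $3$) or a black edge (moving between horizontally adjacent faces, changing the first coordinate by $\pm1$, since by assumption only the black edges wrap around and the columns are delimited by black edges). The key quantitative invariant is that to get from column $0$ to column $5$ one must cross black edges with net horizontal displacement $5$; since each black-edge crossing contributes $\pm1$, the number of black-edge crossings is at least $5$, and it equals $5$ exactly when every black-edge crossing goes to the right. I would record this and also note that red-edge crossings do not change the column, so they can be freely moved past each other along any single column.

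Next I would do the actual transformation. The idea is a sequence of local ``uncrossing'' or ``detour-shortening'' moves, each of which does not increase the multiplicity with which any edge is crossed, and never introduces a crossing with an edge not already crossed by $\gamma$ (this is condition~1). Step one: eliminate any black-edge crossing that goes to the \emph{left}. Such a leftward crossing, say from column $k+1$ back to column $k$, must eventually be undone by a later rightward crossing of the \emph{same} black edge segment in the same row (or a homotopic one), and the sub-path between them can be replaced by the trivial path that stays in column $k+1$; this removes two crossings of that black edge. Because $\gamma$ avoids the faces $(0,\cdot)$ and $(5,\cdot)$ after leaving/before arriving, these back-and-forth excursions all happen strictly inside columns $1,\dots,4$, so the replacement stays inside the grid-block and the endpoints are untouched. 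After finitely many such moves, every black-edge crossing goes right, so exactly $5$ of them occur, one per column boundary, in left-to-right order. Step two: within the portion of the path lying in a fixed column, all crossings are of red edges, moving up or down by $1$ mod $3$; replace this sub-walk by the \emph{geodesic} on the $3$-cycle of faces of that column, i.e. go monotonically upward (the problem statement's normal form uses upward crossings), which again does not increase the number of times any red edge is crossed. Finally, the initial segment before the first rightward black crossing lies entirely in column $0$, i.e. among the faces $(0,\cdot)$ — this is the ``first walks between the faces $(0,i)$'' clause — and no cleanup is needed there beyond what condition~1 already permits. Concatenating the normalized column-segments with the $5$ rightward black crossings yields $\widetilde\gamma$ in exactly the claimed form: first moving among $(0,\cdot)$, then a rightward black crossing, then some upward red crossings, etc.

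The main obstacle I expect is making the ``back-and-forth excursion on a black edge can be excised'' step fully rigorous, because a black edge is a long curve crossing the grid several times, and a leftward crossing of ``black edge $b_t$'' in row $i$ and a later rightward crossing of $b_t$ need not be through the \emph{same arc} of $b_t$ — the dual graph is a grid on a cylinder, but the black edges wrap around, so one must check that the relevant arc of $b_t$ bounding the excised region is indeed crossed twice and that no other edge is dragged into or out of the region. The clean way to handle this is to argue on the level of the planar (cylindrical) subdivision: the black edges, together with the two vertical sides, cut the cylinder strip into the six columns $0,\dots,5$, and a dual walk is just a sequence of column-labels and row-labels; a column sequence that is not monotone increasing has an adjacent repeat $k,k+1,k$, and by the Jordan-curve structure of the arrangement the corresponding sub-walk can be homotoped (rel endpoints, inside the strip) to remove the detour while only decreasing crossing multiplicities. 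Once this reduction-by-column-sequence is in place, the red-edge normalization inside a single column is the easy case (a walk on a $3$-cycle, shortened to a monotone walk), and assembling the pieces is routine.
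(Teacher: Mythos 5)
There is a genuine gap, and it starts with a misreading of the geometry of the grid-block. You assume that only black-edge crossings change the first (horizontal) coordinate, that red-edge crossings merely move within a column, and hence that a path from column $0$ to column $5$ must cross black edges at least $5$ times, with exactly $5$ rightward black crossings in the normal form. But in the grid-block the black edges spiral around the cylinder, so the face adjacency is the one stated in the lemma itself: an upward red crossing goes from $(k,i)$ to $(k+1,i+1)$, i.e.\ it \emph{also} advances the first coordinate. The correct invariant is that the total number of forward steps (black $\rightarrow$ plus red $\nearrow$) in the normalized path is $5$, split as $h+v=5$ in any proportion --- this is exactly what Lemma~\ref{lm:modification1} later relies on (it needs the case $h\le 3$, $v\ge 2$ to occur). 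Your normal form (five black crossings with red excursions confined to single columns) is therefore not the normal form the lemma asserts, and your counting argument (``at least $5$ black crossings'') is false for this arrangement: a path can pass the grid-block crossing no black edge at all.

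A second, independent problem is your ``geodesic on the $3$-cycle'' step. Replacing a downward red crossing by the complementary monotone upward walk changes \emph{which} red edges are crossed (one $\swarrow$ across edge $r$ becomes two $\nearrow$'s across the other two red edges), which violates condition~1 of the lemma. The paper's proof avoids this by using only two kinds of local moves --- annihilating a step with its immediate inverse, and transposing two adjacent steps --- so that no crossing of a new edge is ever created; it then needs a separate case analysis (Fig.~\ref{fig:32gridbadcase}) for the transpositions that would push the path out of the grid-block near its boundary, a complication your sketch does not address. The overall strategy of normalizing by local moves is the right one, but both the invariant you normalize towards and one of your two reduction moves are incorrect as stated.
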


\begin{figure}[htb]
\centering
\includegraphics[scale=1.2]{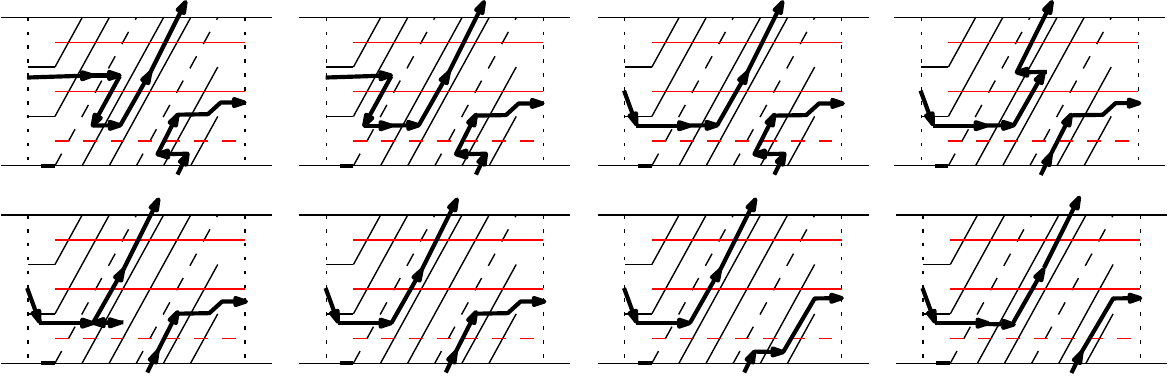}
\caption{Process of the simplification of a path passing through a grid-block.}
\label{fig:32gridwithpath}
\end{figure}

\begin{proof}
We refer to the transition of the path 
from one cell of the arrangement to an adjacent cell as a
\emph{step}.
There are
four different types of steps:
$\rightarrow$, $\leftarrow$, $\nearrow$ or $\swarrow$, depending on
the crossed edge and the direction, see
Fig.~\ref{fig:32gridwithpath}.

We execute the path simplification through a series of local 
modifications on pairs of two consecutive steps: (1) annihilation of two
consecutive steps in opposite directions and (2) changing places of two
consecutive steps that are not yet in a desired order.

The simplification is carried out in two stages. 
 In the first stage (shown in the first 6
pictures in Fig.~\ref{fig:32gridwithpath}) we remove all 
``backward steps'' $\swarrow$ and $\leftarrow$, while possibly
increasing the number of steps the path $\widetilde\gamma$ walks
between faces $(0,i)$, $0 \le i\le 2$.  In the second stage we reorder the
steps $\nearrow$ and $\rightarrow$ such that no $\nearrow$ precedes any
$\rightarrow$.
\begin{figure}[htb]
\centering
\includegraphics{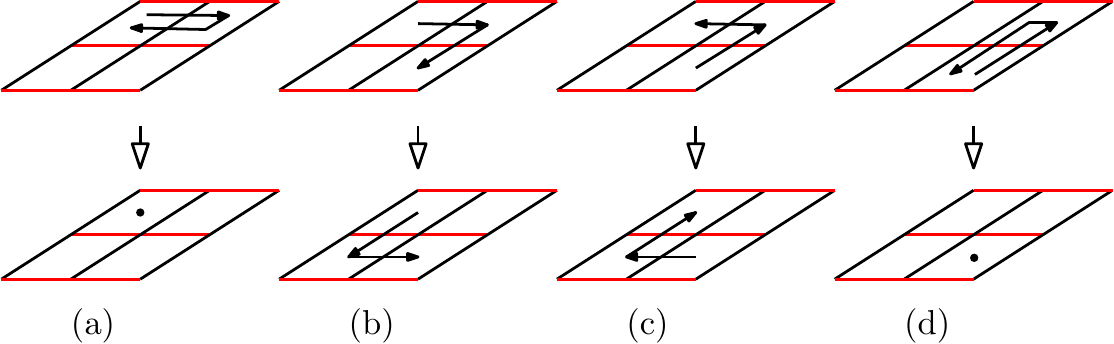}
\caption{4 possible 2-step configurations involving ``backward steps''
  as a second step before (first row) and after (second row) the
  appropriate local modification.}
\label{fig:32gridforbiddenconf}
\end{figure}

{\bf Stage 1:}
We traverse the path until we meet the first $\leftarrow$ or $\swarrow$
step. Together with its preceding step it forms one of the 4
configurations shown in~Fig.~\ref{fig:32gridforbiddenconf}.  In cases
(a) and (d) the steps only differ in their orientation, hence we can annihilate two steps.
In the remaining cases (b) and (c) we reorder the two steps.
This reordering can be safely executed unless it forces the path to
leave the grid-block. 
This, however, may happen only when the
backward step ($\leftarrow$ or $\swarrow$) starts from one of the faces labeled $(2,k)$.  Since this
backward step is the first backward step of the path, we are left with two subcases for each (b) and (c)
depending on the preceding step, which might be either $\rightarrow$ or $\nearrow$.
The four cases are depicted in
Fig.~\ref{fig:32gridbadcase}. All the cases can be handled by further local 
simplifications that are shown in the figure.
\begin{figure}[htb]
\centering
\includegraphics[scale=0.9]{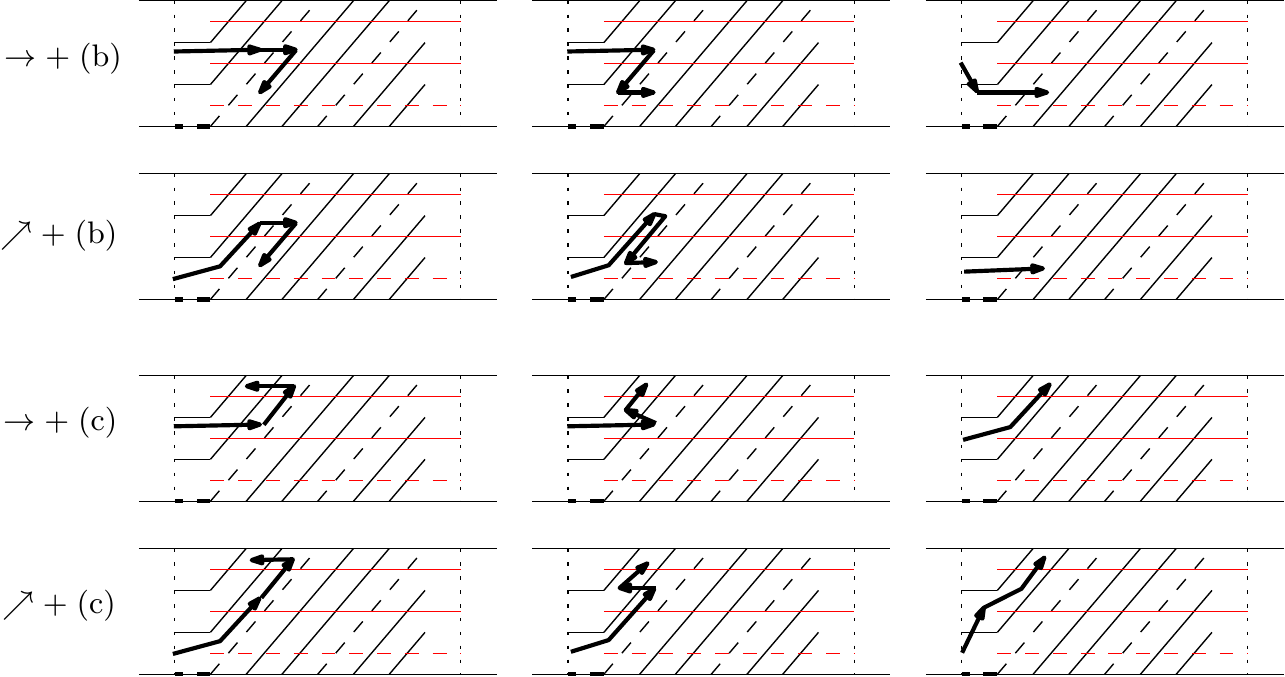}
\caption{Handling of the 4 possible cases (each row represents one case) when
  the local modifications (b) or (c) force the path out of the grid-block.}
\label{fig:32gridbadcase}
\end{figure}

We finish the proof of the stage 1 using double induction on the
number of backward steps and, within it, on the distance from the
beginning of the path to the first backward step.

{\bf Stage 2:}
After the stage 1 our path through the grid,
leaving aside its first steps between faces $(0,i)$,
has only $\rightarrow$ and $\nearrow$ steps. 
These two types of steps can be reordered 
without changing the number of times 
the path crosses any edge of the grid. 
Moreover this reordering never leads the path out of the
grid-block.
\end{proof}


\subsection{A blocking configuration}

We call the building blocks of the following constructions 
\emph{black block} and \emph{red block}, see Fig.~\ref{fig:blackredblocks}. 
We refer to the edges of the red (black) block as 
\emph{red edges} (\emph{black edges}).
Any two red edges, as well as any two black edges, cross exactly twice. 
Note that up to a reflection the red block is homeomorphic to the black block.
\begin{figure}[htb]
\centering 
\includegraphics{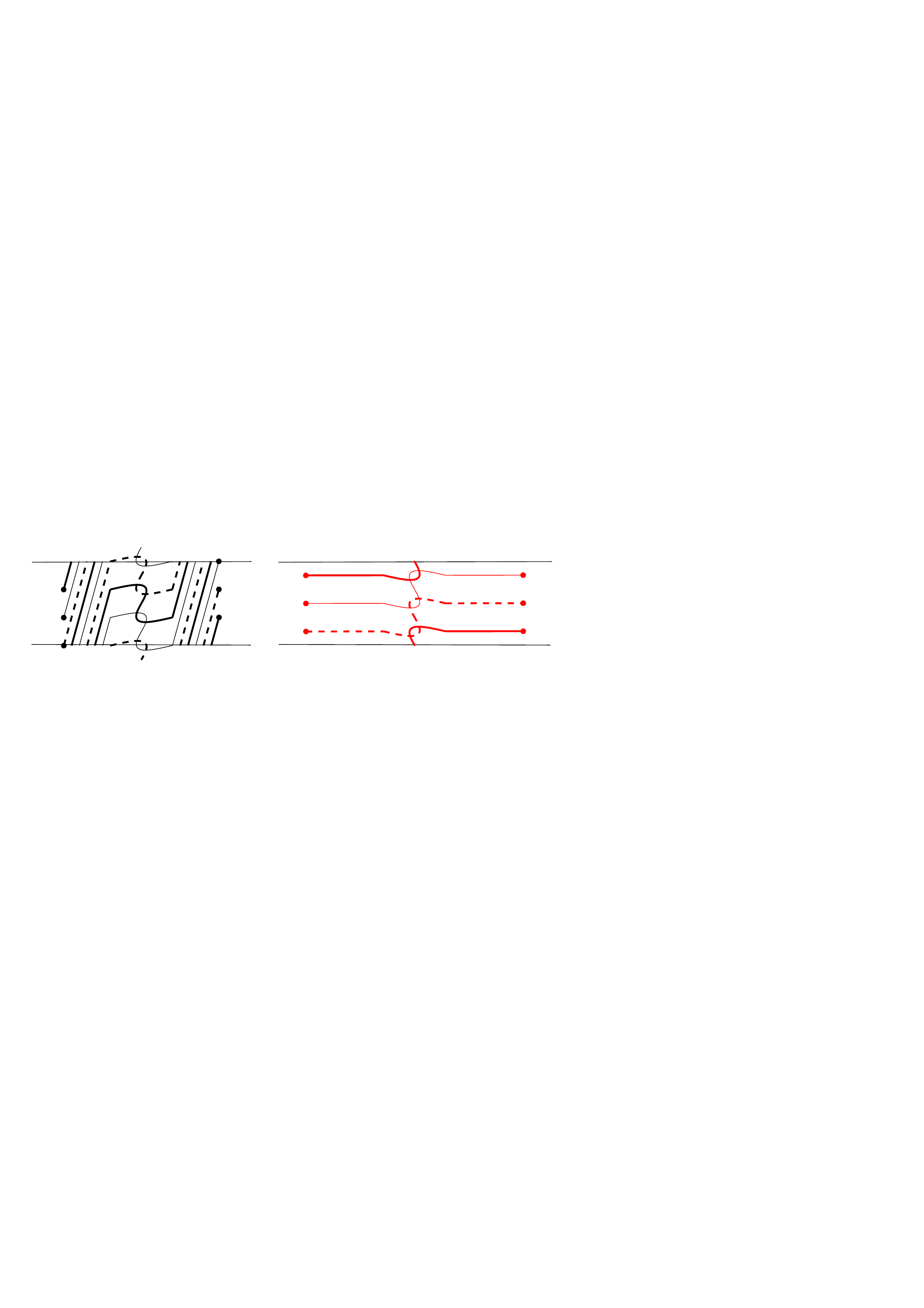}
\caption{A black (left) and a red (right) blocks.}
\label{fig:blackredblocks}
\end{figure}

We combine two black blocks and a red block as shown in Fig.~\ref{fig:3block} 
to obtain a drawing that we call a {\emph {3-block}}. 
Since the red block differs from the black block 
only by a reflection, the 3-block built form
consecutive black-red-black blocks is a mirror image of the 3-block
built from consecutive red-black-red blocks.
\begin{figure}[htb]
\centering
\includegraphics[scale=1.2]{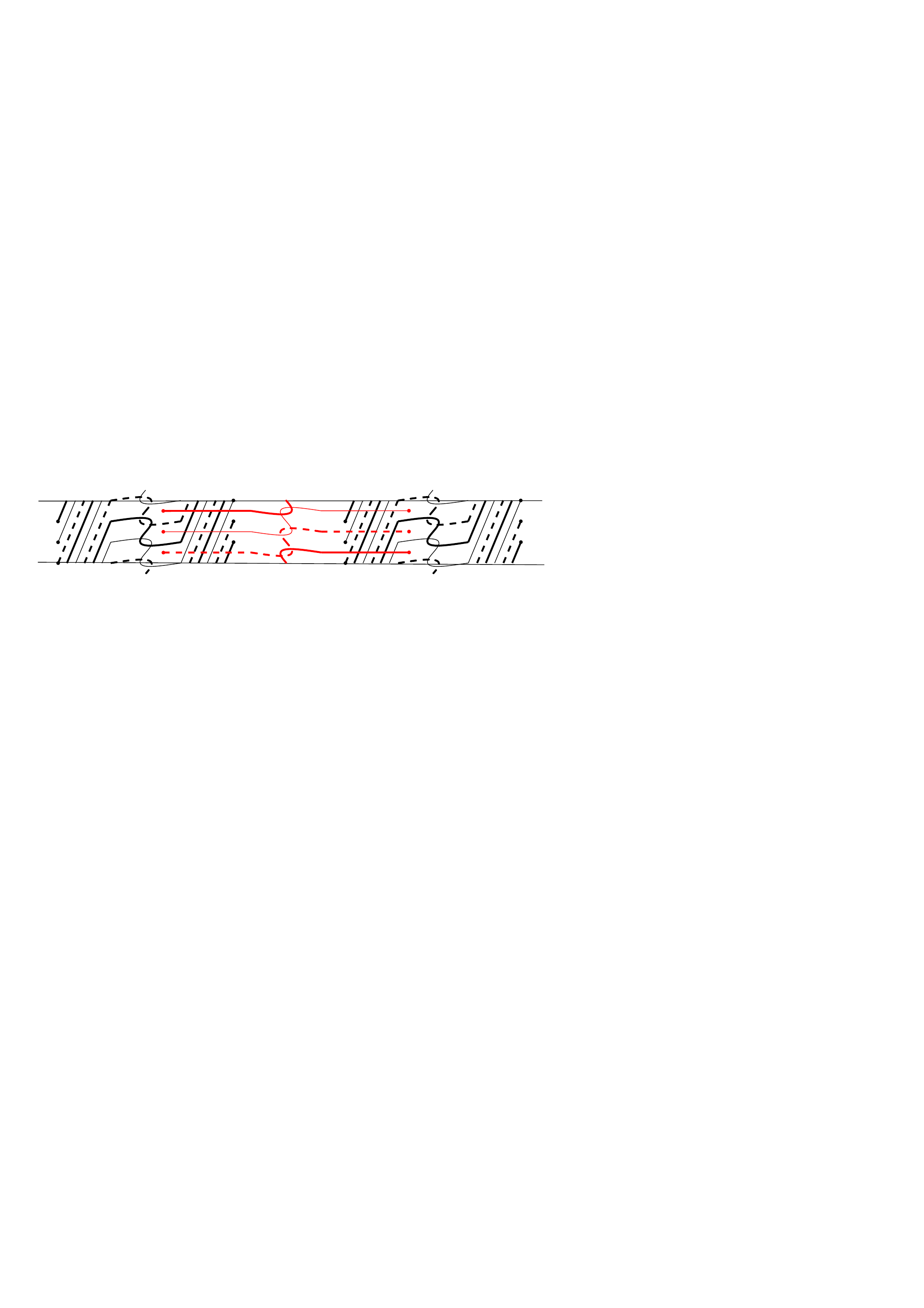}
\caption{A 3-block, formed by consecutive black, red and again black blocks.}
\label{fig:3block}
\end{figure}

The following theorem is the key observation that we need for the construction of the sparse 2-simple drawing.
\begin{theorem}
\label{thm:blocking}
Any path connecting the left with the right sides of the cylinder while passing through the 3-block
 crosses one of the edges forming the 3-block at least 3 times.
\end{theorem}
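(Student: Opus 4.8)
The plan is to argue by contradiction. Assume that $\gamma$ is a path connecting the left and the right side of the cylinder, passing through the 3-block, that crosses every edge of the 3-block at most twice; I will derive a contradiction.

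First I would straighten $\gamma$ using exactly the two local operations from the proof of Lemma~\ref{lm:simplification}: annihilating two consecutive steps that cross one edge in opposite directions, and transposing two consecutive steps that cross distinct edges. Neither operation increases the number of times $\gamma$ crosses any single edge, so both are safe under our assumption. The 3-block is the concatenation of the black, red and black blocks of Fig.~\ref{fig:3block}, and each of these blocks carries a grid-block structure to which Lemma~\ref{lm:simplification} (or an immediate variant of it, after the reflection that turns a red block into a black block) applies; so I would run the simplification of Lemma~\ref{lm:simplification} inside each of the three blocks in turn. The outcome is that $\gamma$ may be assumed \emph{monotone}: it enters the 3-block from the left, crosses each of the two internal block-boundaries exactly once, never leaves the 3-block again, and inside each block it first wanders within its entry column and then performs a run of $\rightarrow$-steps followed by a run of $\nearrow$-steps.

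With $\gamma$ in this canonical form the estimate is short. Each $\rightarrow$- and each $\nearrow$-step advances $\gamma$ by exactly one column of the 3-block and crosses exactly one of its edges, and an $\nearrow$-step additionally advances the vertical coordinate (modulo $3$) by one. Since the 3-block is three grid-blocks of width $5$, $\gamma$ performs at least $15$ such steps, hence crosses the edges of the 3-block at least $15$ times in total; wandering inside the entry columns can only add further crossings. Now I would distribute these crossings. Lemma~\ref{lm:simplification} guarantees that, inside each block, the $\rightarrow$-steps cross that block's ``transverse'' edges and the $\nearrow$-steps cross its ``longitudinal'' edges, and that a run of $\rightarrow$-steps meets the three transverse edges in a cyclic order $\beta_1,\beta_2,\beta_3,\beta_1,\dots$ (and similarly for the $\nearrow$-steps), so the crossings are spread over the relevant three edges as evenly as possible. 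Tracking which edges play the transverse and longitudinal roles in each block --- these roles, hence the colours, swap between a black block and the reflected red block --- one finds that the $\ge 15$ crossings are forced onto so few edges that, as $15$ exceeds twice their number, some edge must be crossed at least three times. This contradicts the assumption and proves the theorem.

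I expect the straightening in the first step to be the main obstacle. Lemma~\ref{lm:simplification} is stated and proved for a single grid-block whose two boundary columns are never revisited; to run it inside the 3-block one must first justify that $\gamma$ may be taken monotone across the two internal block-boundaries (so that it splits into three sub-paths, one per block), and then check that transposing two consecutive steps near the far-left or far-right side never pushes $\gamma$ out of the 3-block --- essentially re-running, with the internal boundaries now present, the boundary case analysis behind Figs.~\ref{fig:32gridforbiddenconf} and~\ref{fig:32gridbadcase}. The remaining, more routine, care lies in the bookkeeping of the final step: deciding exactly which edge absorbs each crossing, in particular for steps sitting immediately next to a block-boundary, so that the ``at most two per edge'' budget is charged to the correct edges.
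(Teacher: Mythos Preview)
Your proposal has a genuine gap, and it stems from a misreading of the 3-block's structure. The 3-block is the concatenation of three \emph{blocks} in the sense of Fig.~\ref{fig:blackredblocks} (three edges each, nine edges in total), not of three \emph{grid-blocks} in the sense of Fig.~\ref{fig:32gridfaces} (six edges each). A single black or red block does not ``carry a grid-block structure'': in a grid-block the three edges of each colour are pairwise disjoint, whereas in a black (or red) block the three edges pairwise cross twice. The grid-block structure appears only \emph{twice} inside the 3-block --- once between the zones $\LZ$ and $\LMZ$ (formed by the $b_i$ together with the $r_i$) and once between $\RMZ$ and $\RZ$ (formed by the $r_i$ together with the $b'_i$) --- and these two grid-blocks \emph{share} the three red edges. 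So ``three grid-blocks of width $5$, hence $\ge 15$ steps'' is not available.

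Even if one grants a lower bound of $15$ crossings, the pigeonhole you invoke fails: nine edges with budget two each allow $18$ crossings, and your sentence ``$15$ exceeds twice their number'' is never cashed out --- you never name the $\le 7$ edges that are supposed to absorb all crossings, and in fact no such global count suffices. The paper's proof hinges on exactly the feature your outline misses: the red edges are reused. Lemma~\ref{lm:modification1} (built on Lemmas~\ref{lm:simplification} and~\ref{lm:step1}) pins down \emph{which} two red edges are already crossed in links~1--2 (namely $r_j,r_{j+1}$ if link~2 ends at $\LMZ_j$), and symmetrically which two in links~4--5; the contradiction then comes from a short case analysis (Fig.~\ref{fig:laststep}) showing that link~3, which lives in the red-block region between $\LMZ$ and $\RMZ$, cannot connect $\LMZ_j$ to $\RMZ_k$ without crossing one of those specific red edges a third time. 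That case analysis is the heart of the argument and cannot be replaced by a colour-blind edge count.
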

Before proving the theorem we provide some helpful lemmas.
We label some of the faces of the arrangement as 
shown in Fig.~\ref{fig:3blockeverything}. In particular, for $i=0,1,2$, we denote the faces containing
the left endpoint of the red edges $\rl_i$ as $\LZ_i$, and the faces containing the right endpoint as $\RZ_i$.
The edges of the left black block are named $\bl_i$ and the edges of the right black block are 
named $\bl'_i$. Finally, let $\LMZ_i$ be the face  that contains the right endpoint of $\bl_i$, and let 
$\RMZ_i$ be the face  that contains the left endpoint of $\bl'_i$.
The region spanned by $\LZ_0$, $\LZ_1$ and $\LZ_2$ is denoted by $\LZ$.
We similarly define regions $\LMZ$, $\RMZ$ and $\RZ$.

\begin{figure}[htb]
\centering
\includegraphics{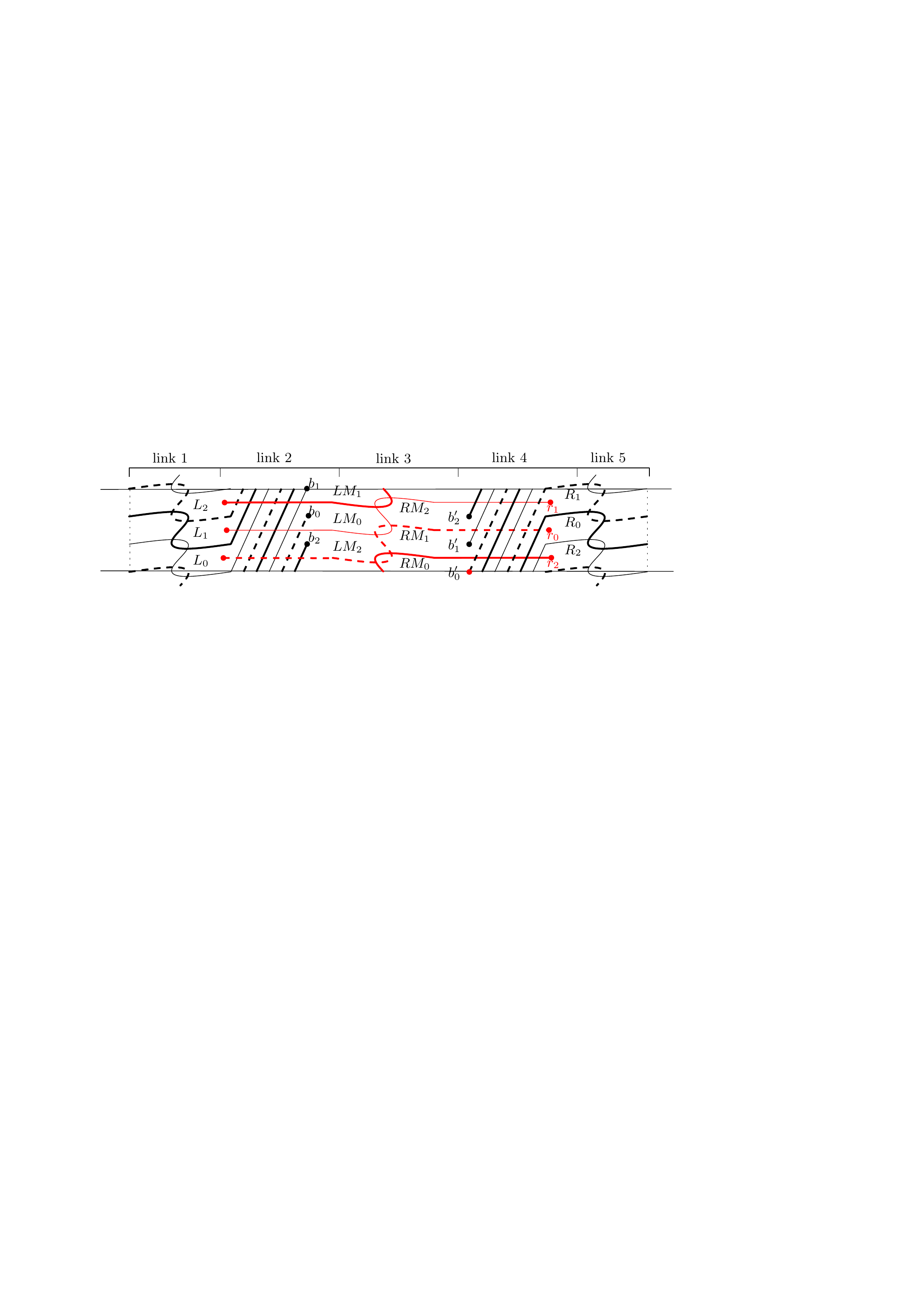}
\caption{A 3-block with some distinguished faces (capital letters) and edges.
 The red edges forming the blocks are labeled $\bl_i$, $\bl'_i$ and $\rl_i$. 
 The ``zones'' at which we subdivide the path into links are labeled above the strip.}
\label{fig:3blockeverything}
\end{figure}

Let $\gamma$ be a path that passes the 3-block.
To facilitate the analysis we subdivide the path $\gamma$ into smaller pieces, which we call \emph{links}. The 
links are defined as follows:
\begin{enumerate}[leftmargin=1.8cm,itemsep=0ex]
\item[link 1:]  from the start point (left) of $\gamma$ to the last point of $\gamma$ in  $\LZ$,
\item[link 2:]  from the last point of $\gamma$ in $\LZ$ to its first point in $\LMZ$,
\item[link 3:]  from the first point of $\gamma$ in $\LMZ$ to its last point in $\RMZ$,
\item[link 4:]  from the last point of $\gamma$ in $\RMZ$ to its first point in $\RZ$,
\item[link 5:] from the first point of $\gamma$  in $\RZ$  to its (right) endpoint.
\end{enumerate}
Before we proceed we check that the links are well defined, i.e., that
the points defining the links appear in order.
For the links~1, 3 and 5 this holds  trivially, while to check it for links 2
(and, symmetric, 4), we need to prove that the last point in $\LZ$ precedes
the first point in $\LMZ$:

\begin{lemma}
No path can visit the regions $\LZ \rightarrow \LMZ \rightarrow \LZ \rightarrow \LMZ$ in this order 
without crossing some of the edges forming the 3-block at least 3 times.
\end{lemma}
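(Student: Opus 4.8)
The plan is to analyze a path $\gamma$ that makes the four region visits $\LZ\to\LMZ\to\LZ\to\LMZ$ and show it must cross some edge of the left black block at least three times; the right black block and the red block play no role here, so we may restrict attention to the sub-drawing formed by $\bl_0,\bl_1,\bl_2$ together with the left endpoints of the red edges (which separate $\LZ$ from $\LMZ$). First I would set up coordinates on the left black block analogous to the grid-block labeling: the black block has its three edges crossing pairwise twice, and together with the cylinder identification this subdivides the relevant strip into a grid-like pattern with three ``vertical'' levels (indices mod~3, one for each red-edge endpoint $\LZ_i$ / $\LMZ_i$) and a small number of ``horizontal'' columns separating $\LZ$ from $\LMZ$. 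The key local fact is that a single crossing of a black edge $\bl_i$ moves the path between adjacent columns, and a path going from $\LZ$ all the way to $\LMZ$ must traverse all the columns, hence must cross the black edges a certain minimum number of times in each traversal.

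The main step is a \emph{charge-counting} argument. For each of the two ``forward'' traversals ($\LZ\to\LMZ$) and for each of the two ``backward'' traversals ($\LMZ\to\LZ$) I would lower-bound the number of crossings with the black block edges, and argue that some individual edge $\bl_i$ must absorb at least three of these crossings. The natural way to do this is: (i) show, exactly as in Lemma~\ref{lm:simplification} (or by a direct homotopy/shortcutting argument on the dual graph), that without loss of generality each traversal between $\LZ$ and $\LMZ$ crosses the black block in a ``monotone'' fashion, so it crosses \emph{each} of $\bl_0,\bl_1,\bl_2$ at least once per traversal; (ii) observe that the path visits $\LZ$ and $\LMZ$ two times each in alternation, giving four traversals of the column strip, hence at least $4$ crossings distributed among only $3$ edges $\bl_0,\bl_1,\bl_2$ — but $4$ among $3$ only forces a ``$2$'', not a ``$3$''. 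So the argument must be sharper: one shows that a single edge $\bl_i$ is crossed by \emph{at least two} traversals in the \emph{same} direction is impossible to avoid without an extra crossing, or, more precisely, that the alternation $\LZ\to\LMZ\to\LZ\to\LMZ$ forces the path to re-cross one specific edge. I would make this precise by looking at which ``level'' $i\in\{0,1,2\}$ the path is in when it is in $\LZ$ versus $\LMZ$: because the black edges shear the levels cyclically (this is exactly the grid-block phenomenon), returning to $\LZ$ after reaching $\LMZ$ and then going back to $\LMZ$ cannot be done while staying ``monotone'' on all three edges simultaneously — the cyclic shift means one edge of the three must be crossed in a non-monotone way, contributing a third crossing.

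Concretely, the cleanest route is a parity/winding argument: assign to the path, at each moment, the triple of signed crossing counts $(c_0,c_1,c_2)$ with the three black edges; the region $\LZ$ corresponds to one ``band'' of states and $\LMZ$ to another, and the cyclic shearing of the black block means that a closed-type excursion $\LZ\to\LMZ\to\LZ$ changes the relative level by a nonzero amount mod~$3$, forcing at least one $c_i$ to have absolute value $\ge 2$ after one round trip, and then the final $\LZ\to\LMZ$ leg pushes some $|c_i|$ to $3$ (or produces a cancellation that itself costs two crossings on a single edge). The main obstacle I anticipate is making the ``monotone without loss of generality'' reduction fully rigorous in this setting with only three edges and the cylinder wrap-around: one has to rule out that the path saves crossings by briefly leaving the strip, exactly the kind of boundary subtlety that made Stage~1 of Lemma~\ref{lm:simplification}'s proof delicate. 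I would handle it by the same double-induction technique (on the number of backward steps and their position), or — probably shorter here — by noting that the links are defined via \emph{first}/\emph{last} visits to $\LZ$ and $\LMZ$, so between consecutive defining points the path stays strictly between the two regions and cannot exploit them, which pins down the relevant sub-drawing and lets the winding computation go through cleanly.
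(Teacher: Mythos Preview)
There is a genuine gap, and it stems from a structural misreading at the very first step. The region between $\LZ$ and $\LMZ$ is a full grid-block formed by \emph{six} edges: the three left black edges $\bl_0,\bl_1,\bl_2$ \emph{and} the three red edges $\rl_0,\rl_1,\rl_2$. Your plan discards the red edges at the outset (``the red block plays no role here'') and tries to run the whole argument on the three black edges alone. But a single black edge is an arc with two endpoints on the cylinder; it does not by itself separate $\LZ$ from $\LMZ$, so your claim~(i) --- that each traversal must cross each of $\bl_0,\bl_1,\bl_2$ at least once --- is unjustified. Everything downstream (the ``four traversals'' count, which should in any case be three; the cyclic-shearing/winding heuristic) is then built on an unverified foundation and never reaches a conclusion.

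The paper's actual proof is a three-line pigeonhole. Keep all six edges of the grid-block between $\LZ$ and $\LMZ$; use the fact (already stated in the paper, due to Kyn\v{c}l et~al.) that any single traversal of a grid-block costs at least $5$ crossings; with three traversals in $\LZ\to\LMZ\to\LZ\to\LMZ$ you get at least $15$ crossings among $6$ edges, and $\lceil 15/6\rceil=3$. No monotonicity reduction, no Lemma~\ref{lm:simplification}-style simplification, no winding argument is needed for this lemma --- those tools are reserved for the harder Lemma~\ref{lm:modification1} and Theorem~\ref{thm:blocking} later on.
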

\begin{proof}
The faces $\LZ$ and $\LMZ$ are separated by a grid-block.  Passing
through it requires at least 5 crossings of its edges.  Any
path visiting $\LZ$ $\rightarrow$ $\LMZ$ $\rightarrow$ $\LZ$
$\rightarrow$ $\LMZ$ would cross the grid-block at least 3
times, and hence it would cross the edges of the grid-block at least $3\times 5 = 15$ times.
Since a grid-block is formed by 6 edges, at least one of
them will be crossed 3 times or more.
\end{proof}

We continue by analyzing the path through the 3-block following its links.
\begin{lemma}
\label{lm:step1}
Any path passing the 3-block from left to right with the last point of link~1 at $\LZ_i$
crosses the edge $\bl_{i+1}$ at least once or one of the edges $\bl_{i}$ and
$\bl_{i+2}$ at least twice at its first link (all indices modulo 3).
\end{lemma}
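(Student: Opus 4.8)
The plan is to argue entirely within the left black block, because --- as one reads off Fig.~\ref{fig:3blockeverything} --- the only edges of the $3$-block meeting the portion of the cylinder that lies between its left side and the region~$\LZ$ are the three black edges $\bl_0,\bl_1,\bl_2$ of that block, and these pairwise cross exactly twice and wrap around the cylinder. First I would invoke the $3$-fold rotational symmetry of the left block, which simultaneously cycles $\bl_0\mapsto\bl_1\mapsto\bl_2\mapsto\bl_0$ and $\LZ_0\mapsto\LZ_1\mapsto\LZ_2\mapsto\LZ_0$, so that it suffices to treat $i=0$: we must show that link~$1$ crosses $\bl_1$ at least once, or crosses one of $\bl_0,\bl_2$ at least twice. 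From the figure I would then record the one structural fact that the whole argument rests on: the starting point of $\gamma$ lies in the ``leftmost'' face $\mathcal S$ of the sub-arrangement $\{\bl_0,\bl_1,\bl_2\}$, while the target face $\LZ_0$ lies ``behind'' $\bl_1$, so that $\LZ_0$ is cut off from $\mathcal S$ by $\bl_1$ together with sub-arcs of $\bl_0$ and $\bl_2$.

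The core is then a trapped-region argument of the type used repeatedly in Section~\ref{sec:simple}. Suppose, for contradiction, that link~$1$ crosses $\bl_1$ not at all and crosses each of $\bl_0$ and $\bl_2$ at most once. Starting in $\mathcal S$, I would follow the path through the finitely many faces of the sub-arrangement $\{\bl_0,\bl_1,\bl_2\}$ on the cylinder: any first move toward $\LZ_0$ must cross $\bl_0$ or $\bl_2$, and it lands the path in a face whose remaining boundary consists of $\bl_1$ together with one further sub-arc of $\bl_0$ or of $\bl_2$; leaving that face forces a crossing of $\bl_1$ (excluded) or another crossing of $\bl_0$ or $\bl_2$, and after at most one more such step the crossing budget --- one crossing of $\bl_0$ and one of $\bl_2$ --- is exhausted while the path is still trapped in a face that does not contain $\LZ_0$. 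This contradicts the fact that link~$1$ ends in $\LZ_0$, so link~$1$ must cross $\bl_1$, or cross $\bl_0$ twice, or cross $\bl_2$ twice. The argument does not care how far link~$1$ strays into the rest of the $3$-block, since once trapped in a face one cannot leave it without paying a boundary crossing.

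The cases $i=1,2$ follow verbatim from the rotational symmetry, and the degenerate configurations in which some named face does not exist are dealt with, as elsewhere in the paper, by ignoring the missing pieces. I expect the real work --- and the main obstacle --- to be the bookkeeping behind the structural fact highlighted in the first paragraph and the face-by-face chase in the second: one must extract the exact cyclic arrangement of $\bl_0,\bl_1,\bl_2$ and of $\LZ_0,\LZ_1,\LZ_2$ from the figure, enumerate the relevant faces of this six-crossing sub-arrangement on the cylinder, and in particular verify that the wrap-around of the black edges does not open an alternative route to $\LZ_0$ that never meets $\bl_1$ and meets each of $\bl_0,\bl_2$ only once. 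Once that arrangement is pinned down, the region-chasing is routine.
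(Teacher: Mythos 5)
Your proposal is correct and follows essentially the same route as the paper: a crossing count against the black edges of the left block based on the cyclic adjacency of the faces $\LZ_0,\LZ_1,\LZ_2$ --- you phrase it as a budget-exhaustion contradiction, while the paper runs a two-step backward trace (a path ending in $\LZ_i$ either crosses $\bl_{i+1}$ or enters from $\LZ_{i+1}$ across $\bl_{i+2}$, applied twice). The ``bookkeeping'' you defer at the end is precisely this one-line adjacency observation, which the paper likewise reads off the figure without further verification.
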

\begin{proof}
A path that ends in $\LZ_i$ crosses either $\bl_{i+1}$ or it crosses $\bl_{i+2}$ while entering from $\LZ_{i+1}$.
Repeating this argument twice proves the lemma.
\end{proof}

%
The following lemma summarizes the behavior of the path on the first two links:
\begin{lemma}
\label{lm:modification1}
Any path $\gamma$ passing the 3-block that does not intersect 
any edge 3 times or more crosses the 
red edges $\rl_{j}$, $\rl_{j+1}$ before it first visits 
the region $\LMZ$ at $\LMZ_j$.
\end{lemma}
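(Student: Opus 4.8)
The plan is to examine the part of $\gamma$ up to the first time it enters $\LMZ$, that is, links~1 and~2 together, and to show it crosses each of $\rl_j$ and $\rl_{j+1}$ at least once. Here $\LMZ_j$ is the first face of $\LMZ$ reached by $\gamma$ and $\LZ_i$ is the last face of $\LZ$ visited on link~1. The two tools are Lemma~\ref{lm:step1}, which controls the crossings of the left black block's edges $\bl_\cdot$ along link~1, and Lemma~\ref{lm:simplification}, which replaces link~2 --- which traverses the grid-block separating $\LZ$ from $\LMZ$ --- by a combinatorially monotone path of no larger crossing multiplicity. The standing hypothesis that no edge is crossed three times or more is used to transfer information between the links: a black edge already crossed on link~1 has little crossing budget left on link~2.

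First I would fix this notation and verify that Lemma~\ref{lm:simplification} applies to link~2. Since link~1 ends at the last visit of $\gamma$ to $\LZ$ and link~2 ends at the first visit to $\LMZ$, link~2 starts in a column-$0$ face, ends in a column-$5$ face, and never revisits these two columns in between, which is exactly the hypothesis of Lemma~\ref{lm:simplification}. Replacing link~2 by the simplified path $\widetilde\gamma$ and reading off its red crossings: $\widetilde\gamma$ meets the red edges only in its final ``up-and-to-the-right'' phase, during which the band index increases monotonically until it first equals~$j$, so the red edges crossed there form an initial segment of the cyclic sequence $\dots,\rl_{j-1},\rl_j,\rl_{j+1},\dots$ terminating at $\rl_j$; hence as soon as this phase makes two crossings it has crossed both $\rl_j$ and $\rl_{j+1}$. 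Next I would invoke Lemma~\ref{lm:step1}: link~1 crosses $\bl_{i+1}$ once, or one of $\bl_i,\bl_{i+2}$ twice. Because link~2 must cross the six grid-block edges at least five times (the bound of Kyn\v{c}l et al.), and because over all of $\gamma$ each black edge can absorb at most two crossings, the crossings of link~2 cannot all fall on black edges; a closer count shows that at least two of them fall on $\rl_j,\rl_{j+1}$, except in a boundary case of the band arithmetic that I treat separately.

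The case analysis is organised on $(j-i)\bmod 3$, with sub-cases according to which black edge Lemma~\ref{lm:step1} charges to link~1. In the generic cases the column/crossing count forces at least two red crossings on link~2 and pins them down as $\rl_j$ and $\rl_{j+1}$. In the remaining tight case, link~2 could a priori use very few red crossings; then the missing crossing must lie on link~1, and I would recover it by an enclosure argument in the spirit of those already used, namely that the closed edges incident to the relevant endpoints, together with $\rl_j$ or $\rl_{j+1}$, bound a region that separates the start of $\gamma$ from $\LZ_i$ (compare the region arguments of Section~\ref{sec:simple} and the lemma above forbidding the visiting order $\LZ \rightarrow \LMZ \rightarrow \LZ \rightarrow \LMZ$). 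I expect this tight case to be the main obstacle: reconciling the black-edge budget handed over by Lemma~\ref{lm:step1} with the red-edge count produced by Lemma~\ref{lm:simplification} while not double-counting the crossings that straddle the boundary between links~1 and~2. Everything else is routine crossing-counting inside the grid-block.
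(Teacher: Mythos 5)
Your setup is the paper's: split off links~1 and~2, apply Lemma~\ref{lm:step1} to link~1 and Lemma~\ref{lm:simplification} to link~2, and count crossings against the budget of two per edge. The gap is in the decisive inequality. You need the simplified link~2 to make at least \emph{two} upward (red) steps, i.e.\ $v\ge 2$ where link~2 consists of $h$ black steps followed by $v=5-h$ red steps; but the count you actually carry out (``each black edge can absorb at most two crossings, so the crossings of link~2 cannot all fall on black edges'') only rules out $h=5$ after subtracting link~1's contribution, i.e.\ it gives $v\ge1$. One red crossing is not enough: the lemma asserts two specific red edges are crossed. You then defer the real work to ``a closer count'' with a ``boundary case'' to be repaired by an enclosure argument showing that the missing red crossing already occurs on link~1. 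That fallback is both unjustified and unnecessary: link~1 can perfectly well reach $\LZ_i$ crossing only black edges (the faces $\LZ_0,\LZ_1,\LZ_2$ are mutually adjacent across black edges), so you cannot in general recover a red crossing from link~1.

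The correct ``closer count'' uses the \emph{order} in which the $h$ horizontal steps of the simplified link~2 meet the black edges, namely $\bl_{i+1},\bl_{i},\bl_{i-1},\bl_{i+1},\bl_{i},\dots$, combined with the three alternatives of Lemma~\ref{lm:step1}: if link~1 crossed $\bl_{i+1}$ once then $h\ge4$ would put three crossings on $\bl_{i+1}$; if link~1 crossed $\bl_i$ twice then $h\ge2$ already overloads $\bl_i$; if link~1 crossed $\bl_{i+2}=\bl_{i-1}$ twice then $h\ge3$ overloads $\bl_{i-1}$. In every case $h\le3$, hence $v\ge2$, and the last two steps of link~2 cross exactly $\rl_{j+1}$ and $\rl_j$ before entering $\LMZ_j$. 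There is no tight case at all, so no auxiliary topological argument is needed. (Separately, watch your index bookkeeping: you describe the red edges crossed as a segment of $\dots,\rl_{j-1},\rl_j,\rl_{j+1},\dots$ ``terminating at $\rl_j$'' yet conclude the two crossed edges are $\rl_j,\rl_{j+1}$; read consistently, the consecutive upward steps end with $\rl_{j+1},\rl_j$.)
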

\begin{proof}
We modify the path $\gamma$ along link~2 following the
simplification procedure described in Lemma~\ref{lm:simplification} to
get a path $\widetilde\gamma$.
Lemma~\ref{lm:simplification} also implies that the link~2 of
$\widetilde\gamma$ consists of exactly~5 ``steps'': first, $0\le h\le 5$
steps crossing the black edges $\rightarrow$ to the right, followed by
$v=5-h$ steps crossing red edges $\nearrow$ upward.

Assume that the first point of link~2 of
$\widetilde\gamma$ lies inside the face $\LZ_{i}$.  Then~$h$ horizontal steps of link~2 
cross the  $\bl_{i+1}$, $\bl_{i}$, $\bl_{i-1}$,
\ldots, $\bl_{i+1-(h-1)}$.  Moreover, Lemma~\ref{lm:step1} guarantees that
already link~1 of the path $\widetilde\gamma$ crossed either
$\bl_{i+1}$ once or one of $\bl_{i}$ or $\bl_{i+2}$ twice.  Since $\widetilde\gamma$
does not cross any of the black edges more than twice, it follows that $h\le 3$.
This, however, shows that $v\ge 2$, which implies that the path $\widetilde\gamma$ crosses
the red edges $\rl_{j+1}$, $\rl_{j}$ before it reaches the last point of
its second link in face $\LMZ_{j}$.
To finish the proof we recall that the path $\gamma$ crosses every
edge of the 3-block at least as many times as $\widetilde\gamma$
and that the last points of the link 2 of $\gamma$ and $\widetilde\gamma$ coincide.
\end{proof}

\paragraph{\it Proof of Theorem~\ref{thm:blocking}.}
We prove by contradiction, namely, we assume that there is a path $\gamma$
that passes through the 3-block while crossing every edge of the 3-block at most twice.
Let~$\LMZ_{j}$ be the face where link~2 ends, and let~$\RMZ_{k}$ be the face where 
link~4 starts. By Lemma~\ref{lm:modification1} we know that $\gamma$ crosses 
 $\rl_j$ and $\rl_{j+1}$ in link~1 and link~2. Since the structure of the link~4 and~5 coincides with the structure
of link~2 and~1  we can apply Lemma~\ref{lm:modification1} also to the last 2 links.
Thus, $\gamma$ crosses $\rl_{k-1}$, $\rl_{k}$ in link~4 and~5.
A short case distinction ($k$ might be either $j$, $j+1$, or $j+2$) shows that $\gamma$ cannot
connect endpoints of link~2 and~4 via link~3
without crossing at least one of the red edges 3 times; see
Fig.~\ref{fig:laststep}.  The figure depicts all 
 ways of how to possibly route the path $\gamma$ in link~3.  
Each of the possible continuations crosses some of the
red edges $\rl_j$, $\rl_{j+1}$, $\rl_{j-1}$ twice and is blocked
within one of the faces before it reaches the face $\RMZ_{k}$.
As a consequence the path $\gamma$ cannot exists.
\qed

\begin{figure}[htb]
\centering
\includegraphics[scale=1.2]{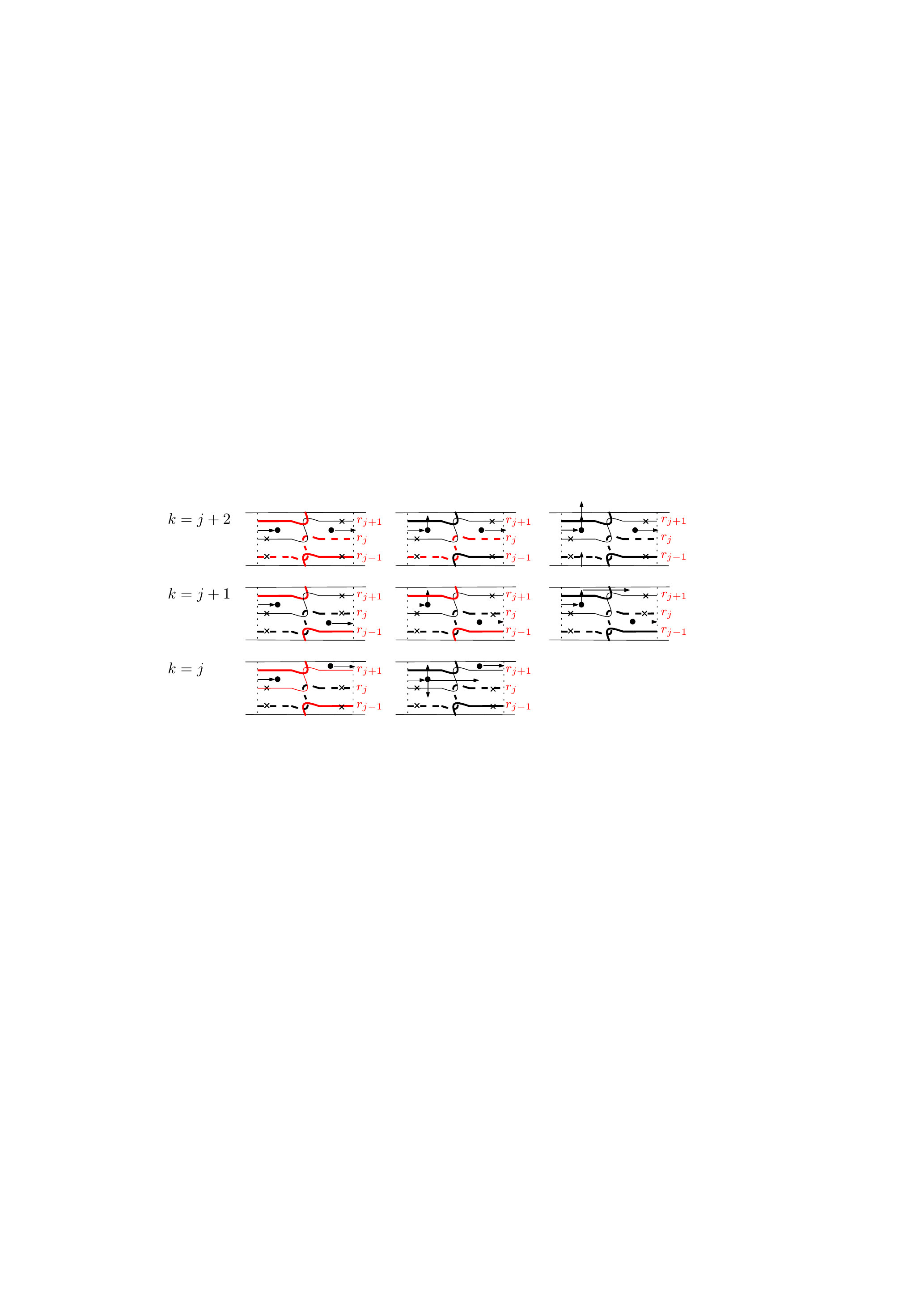}
\caption{Each row depicts a case.  Black
  dots inside faces mark the faces $\LMZ_{j}$ (left) and $\RMZ_{k}$
  (right). Black crosses on red edges mark the edges that are,
  due to Lemma~\ref{lm:modification1}, crossed by the path outside
  link~3.  We color red edges black as soon as they are crossed
  by the path $\gamma$ twice and no more crossings are allowed.  In the case
  $k=j$ the path can be continued in 3 different direction, in each of
  them the path is blocked after one step.}
\label{fig:laststep}
\end{figure}

\subsection{A sparse saturated 2-simple drawing}

We show next how to combine a sequence of 3-blocks in order to obtain a 2-simple saturated drawing with few edges
to obtain the following result.
%
\begin{theorem}
Let $s_2(n)$ denote the minimum number of edges that
a 2-simple saturated drawing with $n$ vertices can have.
Then
$s_2(n)\le 14.5 n.$
\end{theorem}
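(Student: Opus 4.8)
The plan is to follow the template of Section~\ref{sec:simple}, with the $3$-block of Theorem~\ref{thm:blocking} playing the role of the belt. For $n$ a multiple of $3$ I would first exhibit a $2$-simple base drawing $G$ obtained by combining a sequence of $3$-blocks along a long cylinder, so that consecutive $3$-blocks overlap in a layer of three vertices --- in the manner in which Kyn\v{c}l et al.~\cite{kprt} assembled their blocks, our $3$-block simply replacing theirs. Each edge of $G$ stays inside one of the constituent black or red blocks, where two equal-coloured edges cross exactly twice and there are no further crossings, so $G$ is indeed $2$-simple; its vertices fall into $n/3$ layers of three, and it has $\Theta(n)$ edges. (If the counting below needs it, a few extra ``vertical'' edges inside layers can be added to $G$, just as the green edges are added in Section~\ref{sec:simple}.) The drawing proving the theorem will then be any saturated $2$-simple extension of $G$, with the cloning trick of Section~\ref{sec:simple} handling $n\not\equiv0\pmod3$, and with small $n$ being immediate since $K_n$ has only $\binom n2$ edges.

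The crucial structural fact is that in any $2$-simple drawing extending $G$, no new edge can cross a full $3$-block. Indeed, if $uv$ were such an edge with $u$ in or to the left of the region $\LZ$ of some $3$-block $T$ and $v$ in or to the right of its region $\RZ$, then the portion of $uv$ between its last visit to $\LZ$ and its first visit to $\RZ$ would be a path through $T$ connecting the left and right sides of the cylinder, which by Theorem~\ref{thm:blocking} crosses some edge of $T$ at least three times --- contradicting $2$-simplicity. Hence every candidate edge is confined to a window consisting of a bounded number of consecutive $3$-blocks. To turn this into a degree bound I would argue, exactly as in the proofs of Lemmas~\ref{belt}, \ref{neighbors-RB} and~\ref{neighbors}, that an edge entering a $3$-block from outside, or crossing one of its edges ``from the top'', gets trapped in a small face whose remaining boundary it is no longer allowed to cross; iterating this pins down which of the $O(1)$ vertices inside its window a given vertex can be joined to.

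Carrying out this analysis --- splitting the vertices into a few types according to their position inside a block and their distance to the two ends of the cylinder, in the spirit of the enumeration following Lemma~\ref{neighbors} --- should show that the sizes of the potential-neighbour sets sum to at most $29n$. It then follows that any saturated $2$-simple extension of $G$ has at most $\tfrac12\cdot29n=14.5n$ edges, and hence $s_2(n)\le14.5n$; the contribution of the cloning step is absorbed by the slack in this estimate, exactly as in Section~\ref{sec:simple}.

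I expect the main obstacle to be precisely this last point: determining, and counting down to the exact constant, the potential-neighbour set of every vertex type. The wall property is immediate from Theorem~\ref{thm:blocking}, but ruling out the ``short'' candidate edges needs a careful description of the face structure of $G$ near each vertex together with repeated trapping arguments applied to curves that may wind several times through a single block, and it is squeezing the average down to exactly $29$ --- rather than to some larger constant --- that will take the bulk of the work. A smaller technical point is to check that the blocks can be glued along their shared layers without forcing two edges to cross three times.
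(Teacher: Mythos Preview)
Your plan is the paper's: lay down a $2$-simple base drawing on a cylinder, use Theorem~\ref{thm:blocking} to cap vertex degrees at~$29$, saturate, and conclude. Two points where the paper's execution differs from what you anticipate are worth recording.

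\emph{The base drawing.} The paper does not glue overlapping $3$-blocks. It simply lays down $k$ \emph{single} blocks in a row, alternating black and red, each contributing $6$ vertices (so $n=6k$, not a multiple of~$3$); any three consecutive blocks then automatically form a $3$-block in the sense of Theorem~\ref{thm:blocking}. Your description of ``layers of three'' with consecutive $3$-blocks sharing one such layer does not fit the geometry: a $3$-block has $18$ vertices, and in the actual construction adjacent $3$-blocks overlap in two full blocks, i.e.\ $12$ vertices.

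\emph{The degree bound.} The constant $29$ costs almost nothing. Once Theorem~\ref{thm:blocking} is in hand, the $3$-blocks sitting to the left and right of a typical vertex $A$ confine any new edge from $A$ to a window of five consecutive blocks containing exactly $5\cdot 6-1=29$ other vertices; the paper simply exhibits this window in a figure and is done. No short candidate edges are excluded, there are no trapping arguments in the style of Lemmas~\ref{belt}--\ref{neighbors}, and there is no case split by vertex type. The obstacle you flag as ``the bulk of the work'' does not arise here, because the $3$-block wall is already strong enough on its own --- in contrast to the belt of Section~\ref{sec:simple}, which needed the subsequent lemmas to squeeze the degree down.

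For $n\not\equiv 0\pmod 6$ the paper does not clone; it places the leftover $\le 5$ vertices as isolated points, at most three on each end of the cylinder, and the same figure shows that all degrees remain at most~$29$.
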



\begin{proof}
We consider the drawing that repeats the pattern shown in Fig.~\ref{fig:combined}.
The horizontal strip denotes the cylinder.  The drawing is formed by
$k$ consecutive black and red blocks; see
Fig.~\ref{fig:blackredblocks}.  Each block contains 6 vertices, so the
total number of vertices is $k\times 6$.  Clearly, the drawing is
2-simple.
\begin{figure}[htb]
\centering
\includegraphics{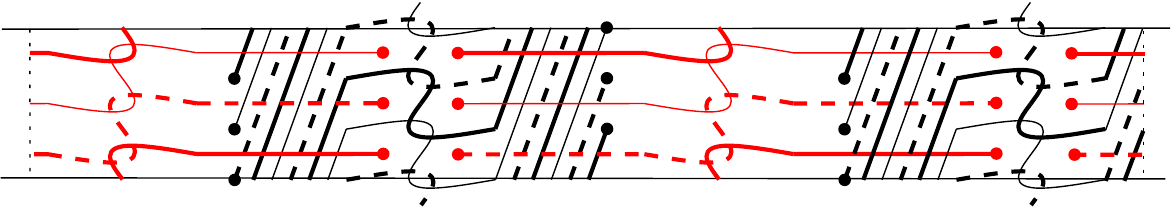}
\caption{A 2-simple drawing that does not allow too many edges to be added.}
\label{fig:combined}
\end{figure}

Now we add as many edges as possible without violating the
2-simplicity, so that the drawing becomes saturated (this padding
procedure is definitely not unique).  Theorem~\ref{thm:blocking}
implies that without violating the 2-simplicity any vertex can be
connected by an edge only to~29 other vertices; see
Fig.~\ref{fig:connection} for ``internal'' vertices and
Fig.~\ref{fig:connectionspecial} for vertices close to the left (right)
boundary of the cylinder. This implies that the maximal number of
edges in the resulting saturated 2-simple drawing is less or equal
than $14.5n$.
\begin{figure}[htb]
\centering
\includegraphics[scale=1.07]{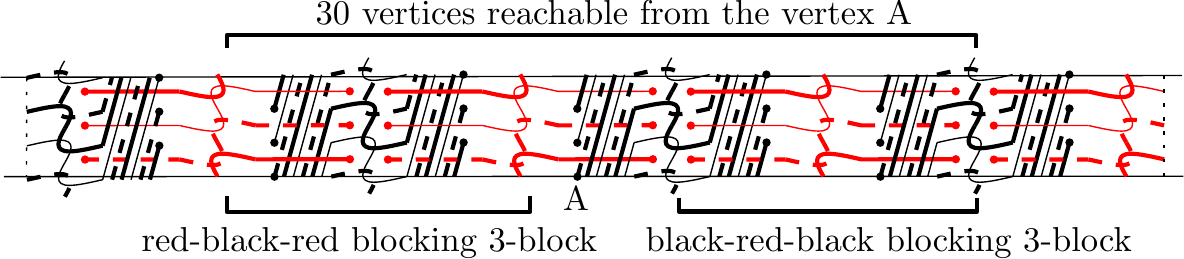}
\caption{The potential neighbors of a typical vertex $A$.}
\label{fig:connection}
\end{figure}

For $n$ not divisible by~$6$ we build the construction above with
$k=\lfloor n/6\rfloor$.  We split the remaining $l=n-6\lfloor n/6
\rfloor$ vertices into two groups of no more than 3
vertices each, and place one group with $l_1$ 
vertices to the left and one group with $l_2$ vertices on
 to the right of the resulting arrangement.

\begin{figure}[htb]
\centering
\includegraphics[scale=1.13]{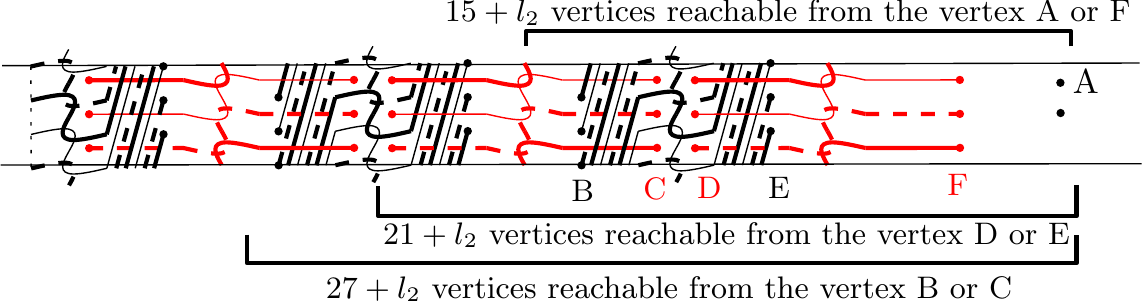}
\caption{The potential neighbors of vertices close to the boundary.}
\label{fig:connectionspecial}
\end{figure}

The possible connections with the newly introduced vertices are
illustrated in Fig.~\ref{fig:connectionspecial}.  Since $l_1,
l_2\le 3$, no vertex has degree greater than $29$.
\end{proof}


\section{Local saturation}
\label{sec:local}
\subsection{Simple drawings}
The lower bound in \cite{kprt} on the number of
edges in a
saturated simple topological graph is based on the following
lemma.

\begin{lemma}[\cite{kprt}]
Let $G$ be a simple topological graph with at least four vertices, and
let $A$ be
a vertex
 of degree at most two.
Then
$G$ has a simple extension by an edge incident to $A$.
\end{lemma}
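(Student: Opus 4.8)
The plan is to show that a vertex $A$ of degree at most two in a simple topological graph on $n\ge 4$ vertices can always be connected to some other vertex by a new curve that crosses each existing edge at most once and does not pass through any vertex. The starting point is to look at the (open) faces of the arrangement of all curves in the drawing that have $A$ on their boundary. Since $\deg(A)\le 2$, the point $A$ is incident to at most two edge-curves, so locally around $A$ there are at most two faces touching it (one face if $\deg A\le 1$). Pick such a face $F$; I want to route the new edge starting into $F$ and reaching some vertex $B$ not yet adjacent to $A$. Because $n\ge 4$ and $\deg A\le 2$, such a target vertex $B$ exists.

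The key step is to argue that $F$ — or more generally the region reachable from $F$ without re-crossing any edge — contains a vertex in its closure other than $A$ and its (at most two) neighbors, and that we can reach it. Here I would proceed greedily: start in $F$, and as long as the closure of the current face contains no admissible target vertex, cross one of the bounding edges that has not yet been crossed, moving to an adjacent face. Crossing an edge $e$ for the first time is always allowed in a simple drawing. The worry is that we might be forced to re-cross an edge we already used; to avoid this, I would choose, among the edges bounding the current region that separate us from an as-yet-unreached vertex, one that we have not crossed before. The main obstacle is precisely to show that this choice is always possible, i.e.\ that a region bounded entirely by already-crossed edges and containing no admissible vertex in its closure cannot occur before we have reached our target.

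To handle this, I would use a counting or connectivity argument on the planar subdivision: consider the union $R$ of all faces reachable from $F$ by crossing each edge at most once. If $R$ contains no admissible vertex in its closure, then every vertex other than $A$ lies strictly outside $R$, so the boundary $\partial R$ separates $A$ from all of them; but $\partial R$ consists of arcs of the given edge-curves, and since the drawing is connected enough (or, if it is not, one can first reduce to the relevant connected piece containing a target vertex), tracing $\partial R$ back one finds an edge incident to a vertex on $\partial R$ that has not been fully crossed, contradicting maximality of $R$. A cleaner route: since $A$ has degree $\le 2$, removing $A$ and its incident edges from the drawing leaves a simple drawing on $n-1\ge 3$ vertices; the face(s) of the \emph{full} drawing incident to $A$ lie inside a single face $\Phi$ of this reduced drawing, and $\Phi$ has at least one vertex $B\ne A$ on its boundary (any face of a drawing with $\ge 1$ vertex touches a vertex, since edges end at vertices). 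Now route the new edge from $A$ inside $\Phi$ to $B$: it only needs to cross the at most two edges incident to $A$, and it crosses each at most once by going "around" $A$ appropriately; if $B$ happens to already be a neighbour of $A$ (possible only when $\deg A\in\{1,2\}$ and the reduced face has few vertices), use the fact that $n\ge 4$ to find another face of the reduced drawing reachable by one further crossing, or observe that $\Phi$'s boundary must then contain a second vertex.

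I expect the genuine difficulty to be this last point: guaranteeing an admissible target, i.e.\ a vertex on the boundary of the reachable region that is \emph{not} already adjacent to $A$, using only the hypotheses $n\ge4$ and $\deg A\le 2$. If $\deg A=0$ this is immediate; if $\deg A=1$, the unique neighbour $B_0$ is the only one to avoid, and since $n\ge4$ there are at least two other vertices, at least one of which must be reachable after peeling off the single edge at $A$; if $\deg A=2$ with neighbours $B_1,B_2$, the face $\Phi$ obtained after deleting $A$ and its two edges must have boundary either a single cycle through $B_1$ and $B_2$ (then, if $n\ge4$, some third vertex is reachable by crossing one boundary edge once) or it already contains a third vertex. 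Formalizing the "reach a third vertex by crossing one more edge" clause — making sure that single extra crossing does not conflict with the two crossings we may need near $A$ — is the technical heart, and I would organize the write-up so that the new edge is built in two controlled stages: a short initial arc near $A$ realizing at most two crossings, followed by a vertex-free arc inside a single face of the reduced drawing, extended by at most one crossing if necessary.
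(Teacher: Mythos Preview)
The paper does not contain a proof of this lemma: it is quoted from Kyn\v{c}l, Pach, Radoi\v{c}i\'c, and T\'oth and invoked only as background to explain the known lower bound $s_1(n)\ge 1.5n$ and to motivate the local-saturation constructions in Section~\ref{sec:local}. There is therefore nothing in the present paper to compare your proposal against.

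On the merits of your sketch: deleting $A$ together with its at most two incident edges and routing inside the face $\Phi$ of the residual drawing that contains $A$ is a sound opening, and you correctly isolate the real issue --- the case $\deg A=2$ in which the only vertices visible on $\partial\Phi$ might be the two existing neighbours $B_1,B_2$ of $A$. Your handling of that case, however, is a list of intentions rather than an argument: you do not show that $\partial\Phi$ must contain a third vertex, you do not show that a single further crossing lands you in a face whose boundary does contain one, and you do not verify that this extra crossing is never forced to be through an edge incident to $A$ (which would make the new edge non-simple). You yourself flag this as ``the technical heart'', and it is left undone. The outline is plausible and can be completed with a short case analysis on the edges of the residual drawing incident to $B_1$ and $B_2$, but as written it is a plan, not a proof.
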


This lemma implies that in a simple saturated
topological graph with at least four vertices,
every vertex
must have degree at least three, and hence the number of edges is at
least $1.5n$.
Can we improve the bound on the edge number by strengthening the lower
bound on the degree? The following considerations establish a limit to
this approach: There are saturated graphs with minimum degree four.

We say that a vertex $S$
in a simple topological graph
 is \emph{
saturated} if
it cannot be connected to a non-adjacent vertex
while maintaining simplicity.
The above lemma implies that in a simple 
topological graph with at least four vertices,
 a saturated vertex
must have degree at least three.

\begin{observation}
For any positive integer $n\ge6$,
there is a simple topological graph
on $n$ vertices with a saturated vertex of degree four. 
\end{observation}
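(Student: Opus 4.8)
The plan is to construct an explicit small drawing in which one distinguished vertex $S$ has exactly four neighbors, and to verify that $S$ is saturated, i.e., that no edge from $S$ to any of the remaining $n-5$ vertices can be added without destroying simplicity. The natural strategy is to mimic the ``blocking configuration'' philosophy of the previous sections on a much smaller scale: place $S$ together with its four neighbors so that the edges incident to $S$ (and a few of the edges among its neighbors) form a closed barrier around $S$, trapping it in a face from which every other vertex is unreachable by a curve that crosses each existing edge at most once. Concretely, I would first draw a small ``core'' gadget on six vertices: $S$ plus four neighbors $P_1,P_2,P_3,P_4$, with the four edges $SP_1,\dots,SP_4$ drawn so that together with some edges $P_iP_j$ they enclose $S$ in a cell whose entire boundary consists of edges incident to $S$ (which a new edge from $S$ may not cross, since two edges sharing the endpoint $S$ may meet only at $S$). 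A sixth vertex (and any further vertices, to reach arbitrary $n\ge 6$) is placed \emph{outside} this barrier.

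The key steps, in order, would be: (1) give the explicit drawing of the core gadget on $\{S,P_1,P_2,P_3,P_4\}$ and one extra vertex, analogous in spirit to Fig.~\ref{fig-neighbors} but engineered so that the closed edges around $S$ form a topological disk with $S$ inside and everything else outside; (2) argue that any candidate new edge leaving $S$ must immediately cross one of the barrier edges — but those barrier edges are either incident to $S$ (forbidden) or can be shown, by the same ``once you enter a cell bounded by closed edges you cannot leave'' argument used repeatedly in Lemmas~\ref{belt}, and in the proofs of Lemma~\ref{neighbors} (the $\pi_b,\pi_r$ arguments), to be uncrossable; (3) conclude that $S$ cannot be joined to any vertex outside, so $\deg S=4$ and $S$ is saturated; (4) pad the drawing up to $n$ vertices by cloning the extra vertex (as in Sect.~\ref{sec:simple}) — cloning happens strictly outside the barrier, so it does not create new potential neighbors of $S$, and the whole drawing can then be completed to a saturated one by the usual padding procedure, with $S$ retaining degree $4$.

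I would state the gadget so that $S$'s four edges, plus exactly two further edges among the $P_i$, bound a quadrilateral-like face containing $S$; the subtlety is that a new edge from $S$ need not cross the barrier \emph{right away} — it could wander among the cells inside before attempting to exit — so step (2) really needs the cell-by-cell ``trapping'' argument rather than a one-line topological claim. The main obstacle I expect is precisely this: designing the small drawing so that the region of the plane reachable from $S$ without a double crossing is a single bounded face containing no other vertex, and giving a clean finite case analysis (a handful of cells, a handful of edges to cross out one at a time, exactly as in the $\pi_A,\pi_B,\pi_C$ argument for $B^L_{i+1}$) that certifies this. Everything else — 2-simplicity of the drawing, the cloning step, handling all $n\ge 6$ — is routine and parallels arguments already made in Sect.~\ref{sec:simple}.
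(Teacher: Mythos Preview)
Your overall strategy---surround $S$ by a barrier and argue that no curve from $S$ can escape---is the right instinct, but you are working much harder than necessary because you miss the one clean idea that makes the paper's proof a single sentence. In the paper's construction the four edges emanating from the distinguished vertex are drawn so that \emph{every face incident to $S$ is bounded exclusively by edges incident to $S$}. Once that is arranged there is nothing to analyze: a new edge from $S$ starts in one of those faces, and leaving the face would mean crossing an edge that already shares the endpoint $S$, which is forbidden in a simple drawing. No $\pi_A,\pi_B,\pi_C$-style cell-by-cell trapping is needed, and no edges among the $P_i$ participate in the barrier at all.

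By contrast, your plan deliberately lets the enclosing region be bounded partly by $P_iP_j$ edges (``$S$'s four edges, plus exactly two further edges among the $P_i$''), and this is exactly what forces you into the multi-step case analysis you anticipate. Worse, with only two $P_iP_j$ edges the barrier you sketch does not obviously close up: a quadrilateral on $P_1,\dots,P_4$ with two missing sides leaves gaps through which a curve from $S$ can exit without crossing anything. So as stated the specific gadget is underspecified and may simply fail; you would have to add more structure (and more case analysis) to make it work. Also, your phrase ``a quadrilateral-like face containing $S$'' is topologically off: a degree-$4$ vertex lies on the boundary of several faces, not in the interior of one, and the argument has to cover each of them. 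Finally, two small slips: you write ``2-simplicity'' where you mean simplicity, and cloning is unnecessary here---to reach arbitrary $n\ge 6$ one just drops the extra vertices into the outer face, since the statement only asks for a simple (not saturated) topological graph in which $S$ is saturated.
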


\begin{figure}[htb]
  \centering
  \includegraphics{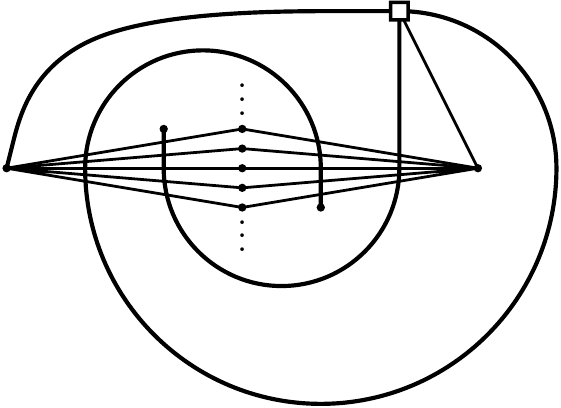}
  \caption{The boxy vertex of degree four is saturated.}
  \label{Minimum_degree_4}
\end{figure}

The observation is due to the construction presented in Fig.~\ref{Minimum_degree_4}.
This example is an extension of the case $n=6$ from
\cite[Fig. 2]{kprt}.
The topmost vertex is saturated since all incident faces
are bounded by edges incident to that vertex.


\begin{figure}[htb]
  \centering
  \includegraphics{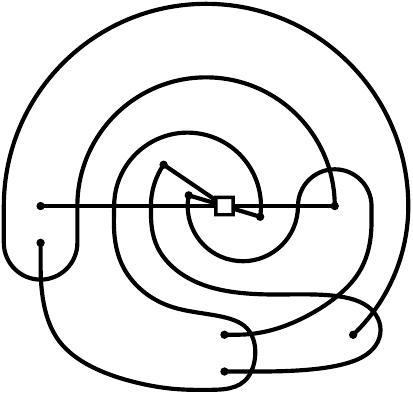}
  \caption{In the simple topological graph above, the central vertex
    has degree~5, and it
           cannot be connected by an edge to any point in 
           the unbounded region while keeping simplicity.}
  \label{mindegree5}
\end{figure}

The following lemma presents a construction that realizes
small vertex degrees for many vertices.
\begin{lemma}
For any positive integer $k$,
there exists a saturated simple topological graph on $10k$ vertices
with $k$ vertices of degree~$5$.
\end{lemma}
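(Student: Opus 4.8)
The plan is to take the construction behind Fig.~\ref{mindegree5}, which realizes a single vertex of degree~5 whose incident faces are all bounded only by its own incident edges, and to replicate it $k$ times inside one large saturated drawing. The gadget in Fig.~\ref{mindegree5} has $10$ vertices: one central vertex~$S$ of degree~$5$ together with a fixed ``shield'' of $9$ further vertices whose edges seal off every face around~$S$. First I would verify the numerology: if we could place $k$ disjoint copies of this gadget in the plane so that no copy interferes with another, we would use $10k$ vertices and obtain $k$ vertices of degree~$5$, matching the statement exactly. So the task reduces to arranging $k$ copies, then saturating the union, in such a way that the $k$ designated central vertices remain saturated and of degree~$5$ after saturation.

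The key step is to place the $k$ gadgets into $k$ pairwise disjoint topological disks $D_1,\dots,D_k$ of the plane (for instance $k$ small disjoint disks lined up in a row, with all remaining structure drawn in the complement). Inside each $D_j$ we draw a scaled copy of the Fig.~\ref{mindegree5} gadget, and we additionally run a curve along $\partial D_j$ realized by one of the gadget's own edges (or a short extra edge) so that the whole gadget, including its central vertex~$S_j$ and the faces around~$S_j$, lies strictly inside $D_j$ and is ``self-contained'': every face incident to $S_j$ is bounded solely by edges incident to $S_j$, exactly as in the single-gadget picture. Then I would take the resulting simple topological graph on $10k$ vertices and saturate it arbitrarily, adding edges until no further edge can be added while keeping the drawing simple. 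The resulting graph is saturated by construction.

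It remains to argue that each $S_j$ still has degree $5$ in the saturated graph, i.e.\ that the saturation step adds no edge incident to any $S_j$. This follows because $S_j$ was already \emph{saturated} in the initial drawing (in the sense defined just above the Observation): all faces incident to $S_j$ are bounded only by edges incident to $S_j$, so any new edge leaving $S_j$ would immediately have to cross one of the five edges at $S_j$ a second time, or revisit $S_j$. Saturation only adds edges; it never removes the edges of our gadgets, so this local blocking picture around each $S_j$ is preserved throughout the saturation process. Hence each $S_j$ keeps degree exactly $5$, and we obtain a saturated simple topological graph on $10k$ vertices with (at least, hence exactly by the degree~$5$ requirement) $k$ vertices of degree~$5$.

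The main obstacle I anticipate is the disjoint-placement and self-containment claim: one must make sure that the gadget of Fig.~\ref{mindegree5} can genuinely be confined to a disk whose boundary is an edge (or is otherwise ``protected'') so that no later saturating edge from outside $D_j$ can sneak into the faces around $S_j$, and that the $k$ copies, together with whatever connective structure one adds, still form a single drawing in which each $S_j$ remains locally saturated. This is a routine but slightly delicate topological bookkeeping step; everything else — the vertex count, the degree count, and the saturation argument — is immediate once the placement is set up correctly.
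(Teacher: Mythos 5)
Your overall strategy is exactly the paper's: take the ten-vertex gadget of Fig.~\ref{mindegree5}, place $k$ copies in the plane so that their interior faces do not overlap, saturate arbitrarily, and observe that saturation only adds edges and hence cannot destroy the local blocking around each central vertex. The vertex count ($10k$) and the degree count are handled the same way. The extra step you propose of running a protective curve along $\partial D_j$ is not needed: once the interiors are disjoint, every vertex of every other copy lies in the unbounded face of a given copy, so it suffices to know that the central vertex cannot be joined to any point of the unbounded face.

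The genuine gap is in how you justify that last property, which is the entire content of the lemma. You assert that the central vertex~$S$ of the gadget has the property that every face incident to~$S$ is bounded solely by edges incident to~$S$. That is the mechanism behind the degree-\emph{four} observation (Fig.~\ref{Minimum_degree_4}), not the degree-five construction, and it is essentially incompatible with simplicity here: in a simple drawing no two edges at~$S$ may cross, so the wedge between two consecutive edges at~$S$ cannot be closed off by edges at~$S$ alone, and a gadget with that property would not need nine auxiliary vertices. The actual gadget of Fig.~\ref{mindegree5} seals the central vertex using edges \emph{not} incident to it, and the paper's proof rests on a case distinction showing that every escape route from~$S$ to the outer face (and, implicitly, to each of the four vertices of the gadget not adjacent to~$S$) forces some edge to be crossed twice. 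Your write-up assumes this verification away, so as it stands the key step --- exhibiting a concrete simple drawing in which a degree-$5$ vertex is saturated --- is missing rather than proved.
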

\begin{proof}
The main idea of our construction is depicted in Fig.~\ref{mindegree5}.
A simple case distinction verifies that no edge can connect the central vertex with a point
on the outer face without violating the simplicity of drawing. 

Now, take $k$~copies of the drawing in Fig.~\ref{mindegree5},
and place them on the plane next to each other such that the interior faces 
of the copies are non-overlapping.
The $k$~copies of the central vertex will remain degree-$5$ vertices
no matter how we saturate the graph.
\end{proof}

\subsection{2-simple drawings}

To study local saturation in 2-simple case we use a slight modification of the 3-block introduced in Sect.~\ref{sec:2simple}; see Fig.~\ref{fig:local2saturation}.

\begin{figure}[htb]
\centering
\includegraphics[scale=1.2]{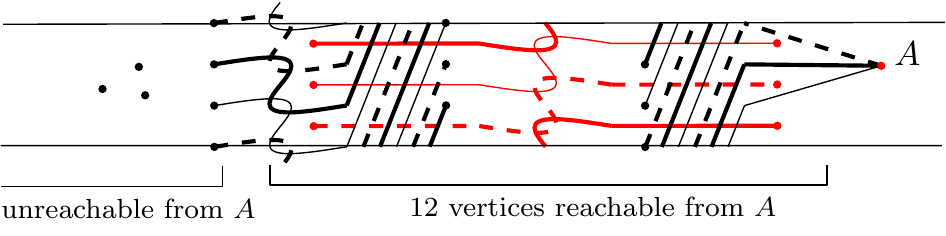}
\caption{The rightmost vertex~$A$ cannot be connected to any vertex that belongs to the leftmost (unbounded) face without violating 2-simplicity.}
\label{fig:local2saturation}
\end{figure}

By the arguments given in the proof of Theorem~\ref{thm:blocking} 
the rightmost vertex can be connected 
to only~$12$ other vertices (Fig.~\ref{fig:local2saturation})
and thus cannot be connected to any vertex that belongs to the leftmost (unbounded) face of the drawing
without violating 2-simplicity.

\begin{figure}[htb]
\centering
\includegraphics{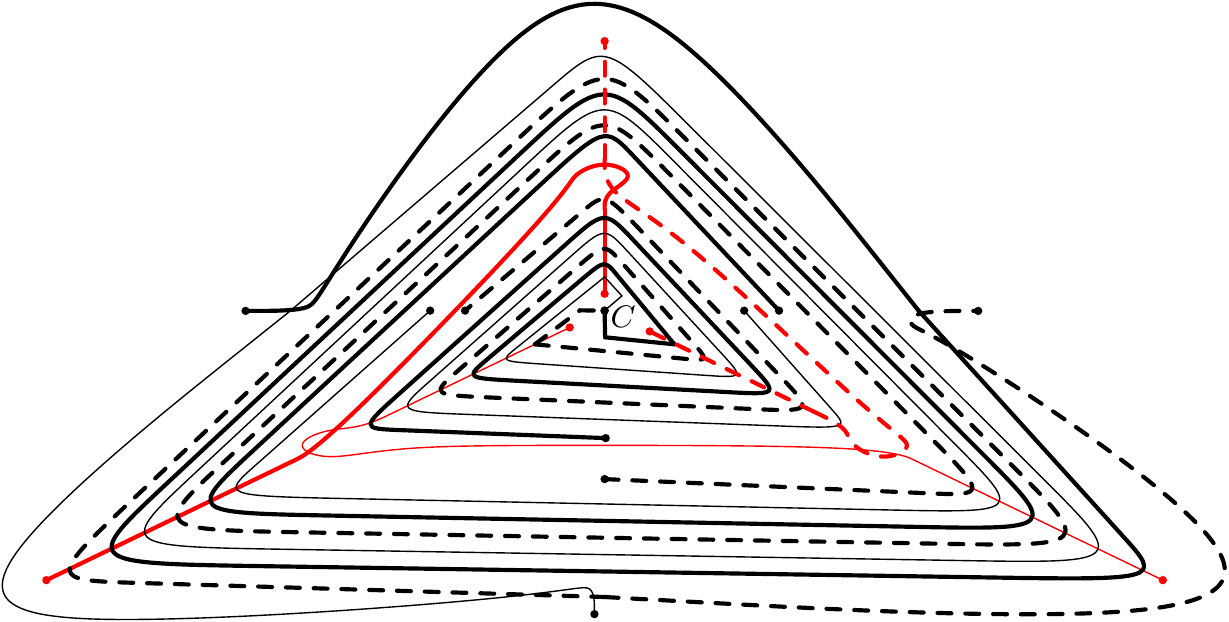}
\caption{Unrolling of Fig.~\ref{fig:local2saturation} to the plane. The central vertex~$C$ corresponds to the rightmost vertex~$A$ of Fig.~\ref{fig:local2saturation}.}
\label{fig:unrolled2loc}
\end{figure}

The ``unrolling'' of this configuration from the cylinder to the plane (with center of the unrolling in the rightmost vertex)
is presented on Fig.~\ref{fig:unrolled2loc}. 
The central vertex cannot be connected by an edge to any vertex that belongs to the unbounded region without violating 2-simplicity, and so it has degree no larger than 12 in any saturation.
After placing  $k$ disjoint copies of this construction to the plane next to each other we obtain the  following result:
\begin{lemma}
For any positive integer $k$,
there exists a saturated 2-simple topological graph on $16k$ vertices
with $k$ vertices of degree $12$.
\end{lemma}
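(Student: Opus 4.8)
The plan is to mimic exactly the strategy already used for the degree-$5$ simple construction in the previous subsection, but with the $2$-simple building block from Fig.~\ref{fig:local2saturation}/Fig.~\ref{fig:unrolled2loc} playing the role that Fig.~\ref{mindegree5} played there. The single-copy statement has in effect already been proved in the paragraphs preceding the lemma: the rightmost vertex $A$ of the modified $3$-block (equivalently, the central vertex $C$ of the unrolled drawing) has at most $12$ potential neighbors, and none of them lies in the unbounded (leftmost) face. This is justified by invoking the analysis in the proof of Theorem~\ref{thm:blocking}: any new edge leaving $A$ toward the unbounded face would have to be a path crossing the $3$-block from one side to the other, and such a path must cross some edge of the $3$-block at least $3$~times, violating $2$-simplicity. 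So the only thing left to do is to assemble $k$ copies without creating new escape routes.

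First I would take $k$ pairwise disjoint copies of the drawing in Fig.~\ref{fig:unrolled2loc}, each drawn inside its own bounded topological disk, and place these $k$ disks in the plane so that their interiors are pairwise disjoint (for instance, side by side along a horizontal line). Each copy contributes $16$ vertices, for a total of $16k$; call the $i$-th copy's central vertex $C_i$. The key observation is that the faces incident to $C_i$ are, by construction, all contained in the interior of the $i$-th disk, so any edge we might add incident to $C_i$ must stay within that disk — in particular it cannot reach any vertex of another copy, and it cannot escape to the common unbounded region. Hence the set of potential neighbors of $C_i$ in the union drawing is exactly the set of potential neighbors of $C_i$ within its own copy, which the single-copy analysis bounds by $12$. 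This bound is unaffected by whatever other edges the saturation process adds elsewhere, because those edges live in other disks or in the unbounded region and cannot interfere with the faces around $C_i$.

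Then I would saturate the union drawing arbitrarily to get a $2$-simple saturated topological graph $H$ on $16k$ vertices. By the previous paragraph, in $H$ each $C_i$ has degree at most $12$; and since $H$ is saturated, each $C_i$ is in fact connected to all $12$ of its potential neighbors, so $\deg_H(C_i)=12$ for every $i$. (Even if one only wants ``at most $12$'' the argument is the same.) This produces the desired $k$ vertices of degree $12$ and proves the lemma. I do not expect a genuine obstacle here: the whole content is that the building block is ``locally saturated'' — all faces around the marked vertex are bounded by edges incident to it, so cloning the block cannot help that vertex — and the degree count $12$ is exactly what the proof of Theorem~\ref{thm:blocking} already delivers. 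The only point requiring a little care is to state clearly why adding edges in other copies or in the unbounded face cannot change the local picture around $C_i$, which follows from the disjointness of the disks and the fact that $C_i$'s incident faces lie strictly inside its own disk.
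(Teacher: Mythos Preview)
Your approach is correct and essentially identical to the paper's: both take $k$ disjoint copies of the unrolled configuration from Fig.~\ref{fig:unrolled2loc}, place them side by side in the plane, and saturate, relying on Theorem~\ref{thm:blocking} to ensure that each central vertex $C_i$ cannot be connected to anything outside its own copy. One small remark: your sentence ``since $H$ is saturated, each $C_i$ is in fact connected to all $12$ of its potential neighbors'' does not follow automatically from saturation alone (some of the $12$ edges could in principle block others), but the paper is equally informal on this point, and your parenthetical shows you are aware of the issue.
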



\section*{Acknowledgment}
The first author thanks G\'eza T\'oth for presenting 
their inspiring results (\cite{kprt}) in Szeged and
for the encouragement during
his investigation.
This research was partially initiated at the EuroGIGA \emph{Workshop on Geometric Graphs (GGWeek '14)} in M\"unster, Germany,
in September 2014,
supported by the European Science Foundation (ESF) through the Collaborative Research Program
\emph{Graphs in Geometry and Algorithms} (EuroGIGA).
We would like to thank all participants for the inspiring discussions.


\bibliographystyle{abbrv} \bibliography{saturated}

\end{document}